\newcommand{\R}{\mathbb{R}}
\newcommand{\Vn}{\mathcal{V}_N}
\newcommand{\Vd}{\mathcal{V}_D}
\newcommand{\ub}[1]{^{(#1)}}
\newcommand{\V}{\mathcal{V}}
\newcommand{\E}{\mathcal{E}}
\newcommand{\etaeff}{\eta_{\textrm{eff}}}
\begin{document}

\title{A Gradient Descent Method for Optimization of Model Microvascular Networks}

\author{Shyr-Shea Chang%
\thanks{Dept. of Mathematics, University of California Los Angeles, Los Angeles, CA 90095, USA.}%
\and
Marcus Roper%
\footnotemark[1]
\thanks{Dept. of Biomathematics, University of California Los Angeles, Los Angeles, CA 90095, USA.}
}


\maketitle

\begin{abstract}
Within animals, oxygen exchange occurs within networks containing potentially billions of microvessels that are distributed throughout the animal's body. Innovative imaging methods now allow for mapping of the architecture and blood flows within real microvascular networks. However, these data streams have so far yielded little new understanding of the physical principles that underlie the organization of microvascular networks, which could allow healthy networks to be quantitatively compared with networks that have been damaged, e.g. due to diabetes. A natural mathematical starting point for understanding network organization is to construct networks that are optimized accordingly to specified functions. Here we present a method for deriving transport networks that optimize general functions involving the fluxes and conductances within the network. In our method Kirchoff's laws are imposed via Lagrange multipliers, creating a large, but sparse system of auxiliary equations. By treating network conductances as adiabatic variables, we derive a gradient descent method in which conductances are iteratively adjusted, and auxiliary variables are solved for by two inversions of $O(N^2)$ sized sparse matrices. In particular our algorithm allows us to validate the hypothesis that microvascular networks are organized to uniformly partition the flow of red blood cells through vessels. The theoretical framework can also be used to consider more general sets of objective functions and constraints within transport networks, including incorporating the non-Newtonian rheology of blood (i.e. the Fahraeus-Lindqvist effect). More generally by forming linear combinations of objective functions, we can explore tradeoffs between different optimization functions, giving more insight into the diversity of biological transport networks seen in nature.
\end{abstract}

\section{Introduction\label{sec:intro}}

The human cardiovascular network contains billions of vessels, ranging in diameters from centimeters to microns, and continuously carries trillions of blood cells. Cardiovascular networks are robust in some respects and fragile in others. They are robust in the sense that although each network is far more complex than even the largest traffic or hydraulic networks built by humans, in healthy organisms microvascular networks show remarkably little of the chronic patterns of traffic congestion that plague human-built networks. At the same time, the microvascular part of the network; made up of fine vessels less than 8 $\textrm{$\mu$m}$ in diameter, is susceptible to accumulated damage from micro-occlusions\cite{albers2002transient} and micro-aneurysms\cite{klein1995retinal}. This cardiovascular damage is a leading cause of aging related health problems. Systemic microvascular damage associated with diabetes mellitus, can lead to erectile dysfunction\cite{fonseca2005endothelial}, limb loss\cite{pecoraro1990pathways}, neuropathy\cite{reichard1993effect} and dementia\cite{biessels2006risk}. Although each of these forms of microvascular damage is diagnosed and treated in a completely different way, they may have a common physical basis. We therefore ask: What physical functions are microvascular networks organized to perform, and what forms of damage interfere with its ability to perform these functions?

Techniques like plasticization have long enabled the largest vessels in the cardiovascular network to be mapped out. More recently Micro-optical Sectioning Tomography (MOST) has been used to map the blood vessels within rodent brains to micron resolution\cite{wu20143d}, and mapping the blood vessels in the human brain is one of the central goals of the BRAIN initiative\cite{insel2013nih}. Meanwhile long working distance two photon microscopes can be used to directly measure blood flows within living rodent brains\cite{chaigneau2003two,dunn2005spatial}. But using this data still requires understanding of the organizing principles for microvascular networks.

A natural mathematical starting place for deriving organizing principles for transport networks is to frame the problem of network design as a problem in optimization. For example, in 1926 Murray first derived relationships between vessel radii and fluxes at different levels of the arterial network, assuming that the network minimizes a total cost made up of the viscous dissipation and a metabolic cost of maintaining the vessels that is proportional to their volume\cite{murray1926physiological,murray1926physiologicalangle}. A particular consequence of this optimization, is that when a `parent' vessel within the network divides into two `daughters', the sum of the cubes of the daughter radii will equal the cube of the parent radius\cite{sherman1981connecting}, and this result has been validated in studies on real animals\cite{taber2001investigating,sherman1981connecting,zamir1992relation}. The notion of cardiovascular networks as optimizing transport has since found many applications, underlying theoretical models for how energy needs scale with organism size\cite{savage2008sizing,west1997general} as well as clinical computational fluid dynamics (CFD) studies in which different candidate surgical graft geometries are ranked by their transport efficiency\cite{de1996use,marsden2007effects,yang2010constrained}. 

Many (but not all, see Zamir\cite{zamir1992relation}) studies of larger vessels (typically extending down to a few mm in diameter) show that they conform to Murray's law, suggesting that on a population level, these vessels are organized to minimize dissipation. However fine vessels account for a large share of the total network dissipation; for example in humans capillary beds and the arterioles that supply them, account for about a half of the total dissipation in the cardiovascular network\footnote{Since the total flux of blood is the same at each level of the vascular network, we can estimate the dissipation at each level from pressure measurements, such as those summarized in Guyton and Hall\cite{hall2015guyton}}. Yet we are aware of no data that shows that principles of dissipation minimization extend to these vessels, which are typically arranged into topologically complex networks\cite{chaigneau2003two,wu20143d} (also see Figure ~\ref{fig:bionetwork_example}). Indeed our own analysis of the zebrafish trunk microvasculature, which is a model system for studying vascularogenesis, showed that uniform partitioning of red blood cells between the many fine vessels perfusing the trunk, is a more likely candidate optimization principle for these networks than minimizing dissipation\cite{chang2015optimal}. In fact we showed that the adaptations used within the zebrafish trunk network to ensure uniform perfusion directly lead to an 11-fold increase in dissipation within the network\cite{chang2015optimal}.

To understand the function of microvascular networks, and indeed to understand biological transport networks generally, which may be optimized for mixing\cite{alim2013random,roper2013nuclear}, resistance to damage\cite{bebber2007biological,katifori2010damage}, or for the ability to accommodate high variations in flow\cite{katifori2010damage,corson2010fluctuations}, it would be highly useful to have a framework for generating networks that optimize a particular target function, while respecting constraints. Before introducing our method for optimizing general functions we first describe previous methods for generating optimal transport networks (the relationship of this paper to these previous works is also presented in Table 1). Early methods for optimization followed Murray's original approach\cite{durand2006architecture}, by optimizing transport within individual vessels, or at junctions in which single vessels bifurcate\cite{durand2006architecture}. Although these methods allow local geometric optimization -- i.e. of the position and angles of branching points within a network -- they can only be used once the topology of the network, that is, the sequence in which vessels branch or fuse, has been defined. Banavar et al.\cite{banavar2000topology} and Bohn and Magnasco\cite{bohn2007structure} developed an iterative scheme that allowed optimization of entire networks linking a given set of sources to a given set of sinks given constraints on the total amount of material available to build the network. This approach made use of the fact that the laws governing flow in a network (Kirchoff's first and second laws, which will be described in more detail below), are automatically satisfied when dissipation is minimized within a network\cite{durand2007structure}. Katifori et al.\cite{katifori2010damage} and Corson\cite{corson2010fluctuations} later developed this theory to study networks that are designed to minimize dissipation given fluctuating set of source and sink strengths, or under variable damage (in which a random set of links within the network is eliminated). All of these works adopt an iterative approach, in which the conductances of network edges are iteratively updated until the dissipation is minimized: Corson\cite{corson2010fluctuations} uses a relaxation method, while Katifori et al.\cite{katifori2010damage} use gradient descent. In both cases, implicit use is made of the fact that the optimal network (i.e. the one that minimizes dissipation) will also obey Kirchoff's laws. 

Recent advances have focused on how structural adaptation (the process by which vessels within the transport network adjust their radii in response to the amount of flow that they carry) can be used to produce results equivalent to searching for a dissipation minimizing configuration by gradient descent\cite{hu2013adaptation,pries1998structural}. These works also highlight that incorporating both growth and structural adaptation in a network can reliably find global dissipation minimizing configurations (as opposed to locating only local minimizers within a rough landscape)\cite{ronellenfitsch2016global}.

By contrast, the problem of minimizing other functions on networks has received relatively little attention. This is likely because, although there is strong evidence that some biological transport networks, such as fungal mycelia and slime mold tubes\cite{roper2013nuclear,alim2013random}, are adapted to maximize the amount of mixing of the fluids, nutrients and organelles that are transported by the network, microvascular networks have generally been thought to conform to the same principles of dissipation minimization as larger vessels. However, our own work on the embryonic zebrafish vasculature shows that the fine vessels in the trunk are organized to all receive red blood cells at identical rate\cite{chang2015optimal}. Red blood cell partitioning is achieved by increasing the resistance of vessels near the head of the fish over vessels near its tail, leading to a large (11 fold) increase in the dissipation within the network. This study therefore suggests that uniformity of flows, rather than minimization of dissipation, underlies the design of the zebrafish trunk microvasculature. However, our ability to determine whether the principle of flow uniformity may rule in other real networks, or to test alternate candidate optimization principles, is limited because, unlike dissipation, there is no existing method for optimizing general functions that can be evaluated over transport networks. The main mathematical challenge that must be overcome to create such an optimization method is to ensure that in addition to minimizing the given function with given constraints, for example on the total material, the optimal network must respect constraints associated with Kirchoff's laws, which are not automatically satisfied at optima if the function of interest is not the energy dissipation within the network.

Here we devise a method for minimizing arbitrary functions on networks. The method is described in Sections \ref{sec:setup} and \ref{sec:algorithm}. It uses gradient descent that can be rigorously shown to locate local minima of a given function, with a heuristic simulated annealing method, that has previously been shown\cite{katifori2010damage} to be capable of finding global minima in rough landscapes. As a consistency check, we initially use this method to generate networks that minimize dissipation for a given amount of material, checking first that it is consistent with previous results on optimal networks (in Section \ref{subsec:sing_source_sing_sink}), and second showing how these results can be modified if the non-Newtonian rheology of real blood is incorporated into models (in Section \ref{sec:FL}). Then, inspired by our demonstration of uniform flow in the zebrafish vascular network\cite{chang2015optimal}, we go on to minimize a function representing the uniformity of flow within transport networks (Section \ref{sec:unif_flow}), enabling us to calculate the {\it optimal zebrafish trunk vasculature} (Section \ref{sec:zebrafish}). Finally (also in Section \ref{sec:zebrafish}) we use our method to solve for hybrid functionals in which a linear combination of uniformity and dissipation are minimized: allowing the relative priority of uniformity and dissipation to be continuously varied, and allowing us to generate diverse optimal networks to compare with experimental observations.

\begin{table}[htbp]
\hspace*{0cm}
\begin{tabulary}{1.4\textwidth}{|p{4 cm}|p{4cm}|p{6 cm}|}
\hline
Target functional & Constraint & Method \\
\hline
$\sum \frac{Q_{kl}^2}{\kappa_{kl}}$ & $\sum \kappa_{kl}^\gamma$  & local topological optimization\cite{durand2006architecture}, global optimization \cite{bohn2007structure}, structural adaptation\cite{hu2013adaptation}, growth and structural adaptation\cite{ronellenfitsch2016global}, Section \ref{subsec:sing_source_sing_sink}\\
\hline
$\sum \frac{Q_{kl}^2}{\kappa_{kl}}$ with damage and flow fluctuations   & $\sum \kappa_{kl}^\gamma$& global optimization\cite{katifori2010damage}, fluctuating source\cite{corson2010fluctuations}\\
\hline
$\sum \frac{Q_{kl}^2}{\kappa_{kl}}$ &
network volume, including Fahraeus-Lindqvist effect
  & Section \ref{sec:FL}\\
\hline
$\sum \frac{1}{2}Q_{kl}^2$ & $\sum \kappa_{kl}^\gamma$ & Section \ref{sec:Qsqanal}\\
\hline
$\sum \frac{Q_{kl}^2}{\kappa_{kl}}$ & $\sum \left(\kappa_{kl}^\gamma+a\frac{Q_{kl}^2}{\kappa_{kl}}\right)$ & Section \ref{sec:Murray_constaint}\\
\hline
$\sum \frac{1}{2}(Q_{kl}-\bar{Q})^2$ & $\sum \kappa_{kl}^\gamma$ & Section \ref{sec:zebrafish}\\
\hline 
$\sum \frac{1}{2}(Q_{kl}-\bar{Q})^2$ & $\sum \left(\kappa_{kl}^\gamma+a\frac{Q_{kl}^2}{\kappa_{kl}}\right)$ & Section \ref{sec:zebrafish}\\
\hline

\end{tabulary}
\caption{New results presented in this paper, shown with previous works.}
\end{table}

\begin{figure}[h]

	\begin{center}
		\includegraphics[width=10 cm]{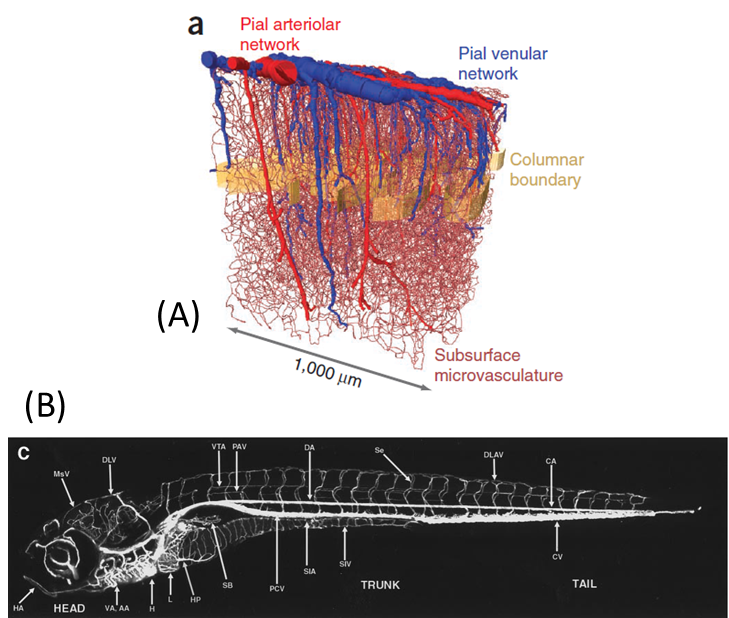}
\caption{Examples of complex microvascular networks. (A) Capillary network in mouse sensory cortex\cite{blinder2013cortical}. (B) Microvascular network of zebrafish 7.5 days post fertilization (dpf) embryo\cite{isogai2001vascular}.}

\label{fig:bionetwork_example}
\end{center}

\end{figure}

\begin{figure}[h]

	\begin{center}
		\includegraphics[width=8 cm]{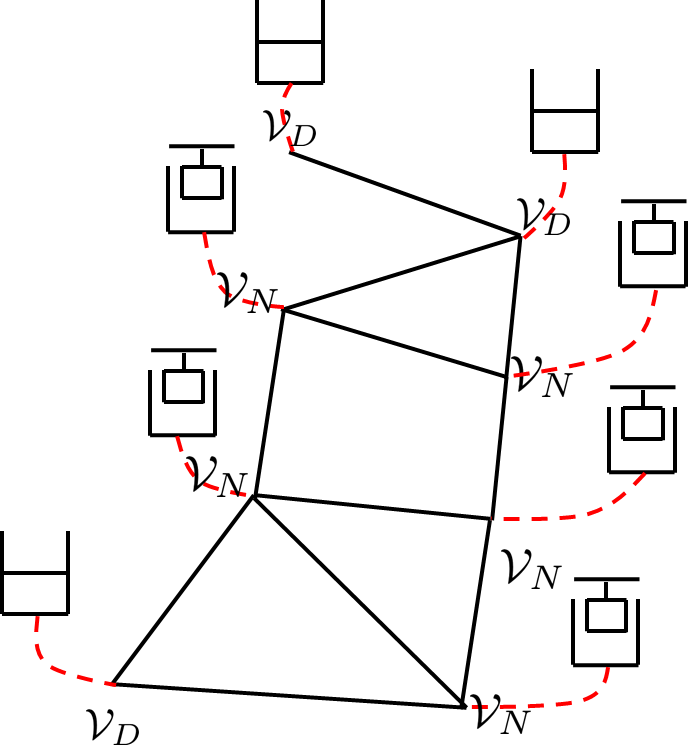}
\caption{Transport network with the Dirichlet vertices $\Vd$ and Newmann vertices $\Vn$. In our representation we imagine Dirichlet vertices connected to fluid reservoirs allowing pressure to be imposed, and Neumann vertices to syringe pumps, allowing inflows or outflows to be imposed}

\label{fig:network_diagram}
\end{center}

\end{figure}

\section{Setup} \label{sec:setup}

First we mathematically frame the problem of finding optimal networks for general network topology. Consider an undirected graph $(\mathcal{V},\mathcal{E})$ with $V$ vertices $k=1,\ldots,V$. For any given 2 nodes $k,l$ we write $\langle k,l\rangle = 1$ if there is a edge linking $k$ and $l$ and $\langle k,l\rangle = 0$ if $k$ and $l$ are not linked. Each edge in the network is assigned a conductance $\kappa_{kl}$; the flow $Q_{kl}$ in the link is then determined by $Q_{kl} = (p_k-p_l)\kappa_{kl}$,
where $p_k$ and $p_l$ are respectively the pressures at the vertices $k$ and $l$. In typical microvascular networks vessel diameters are on the order of 10 $\mu$m, and blood flow velocities are on the order of 1 mm/s, so the Reynolds number, which represents the relative importance of inertia to viscous stresses, is $Re = UL/\nu \approx 4 \times 10^{-3}$, using the viscosity of whole blood $\nu \approx 2.74 \; \textrm{mm}^2/\textrm{s}$. Since $Re \ll 1$ inertial effects may be neglected, and by default the conductances of individual vessels will be obtained from he Hagen-Poiseuille's law\cite{acheson1990elementary}:

\begin{equation}
\kappa = \frac{\pi r^4}{8\mu \ell}
\label{eq:Hagen_Poiseuille}
\end{equation}

\noindent where $\kappa$ is the conductance, $\mu$ is the blood viscosity, $\ell$ is the vessel length, and $r$ is the vessel radius. In ascribing a well-defined pressure to each vertex within the graph, and applying the Hagen-Poiseuille law to compute edge flows from pressures, we assume that there are unidirectional flows within each vessel, ignoring the entrance and exit effects that occur when vessels branch or merge. At moderate Reynolds numbers, entrance effects can strongly affect the flow through vessels, for example by leading to phase separation, whereby red blood cells divide in different ratios at a junction than whole blood\cite{pries2005microvascular}. However, these effects contribute quite weakly for the low Reynolds number flows being modeled in this paper, for example in our previous studies of the zebrafish trunk vascular network, we found that total variation in hematocrit from vessel to vessel was no more than 2-fold. Moreover, we expect the entrance and exit effects to penetrate a distance comparable to the vessel diameter. Since typical microvascular vessels have diameters on the order of 5-10 $\mu$m and lengths on the order of hundreds of $\mu$m, we therefore expect entrance and exit effects to contribute negligibly to the total resistance of the vessel.

The networks we consider consist of vertices and predescribed edges where conductance may be positive (or zero if required by the algorithm) along with two kinds of boundary conditions on vertices (Fig.~\ref{fig:network_diagram}). At any vertice in the network we can either impose Kirchoff's first law (conservation of flux)
\begin{equation}
\sum_{l\,:\,\langle k,l\rangle = 1} Q_{kl} = \sum_{l\,:\,\langle k,l\rangle = 1} \kappa_{kl}(p_k-p_l) = q_k \qquad \forall 1\leq k\leq V~, \label{eq:KirchoffI}
\end{equation}
where $q_k$ is the total flow of blood entering the network (or leaving it if $q_k<0$) at vertex $k$, or we impose $p_k = \bar{p}_k$ (i.e. pressure is specified). We say a node is in $\Vd$ if pressure is specified, or in $\Vn$ if Kirchhoff's first law is imposed, with possible inflow or outflow. This system of $V$ linear equations forms a discretized Poisson equation with Neumann and Dirichlet boundary conditions imposed on selected nodes, and the flow is uniquely solvable if and only if each connected component of the network (connected by edges with positive conductances) either has at least one Dirichlet vertex or $\sum_{k\in \Vn} q_k = 0$ with sum restricted to the component\cite{LP:book}. The general problem that this paper will address is how to tune the conductances within the network to minimize a predetermined objective functional $f(\{p_k\},\{\kappa_{kl}\})$, where $\{p_k\}$ means the set of all $p_k$'s and $\{\kappa_{kl}\}$ denotes the set of all $\kappa_{kl}$'s. Previous works (see Table 1) have shown how to generate networks that minimize the total viscous dissipation occurring within the network: $
f(\{p_k\},\{\kappa_{kl}\}) =\sum_{k>l, \langle k,l\rangle = 1} \kappa_{kl}(p_k-p_l)^2$.

However, the pressures $\{p_k\}$ and conductances $\{\kappa_{kl}\}$ are coupled through Equations (\ref{eq:KirchoffI}). Since the relationship between $\{p_k\}$ and $\{\kappa_{kl}\}$ is holonomic, we may incorporate it into a functional via Lagrange multipliers. The functional that we want to minimize in this paper will take the form:
\begin{align}
\Theta  = & f(\{p_k\},\{\kappa_{kl}\}) + \lambda\left[ \sum_{k>l,\langle k,l\rangle = 1}\left( a\kappa_{kl}(p_k-p_l)^2+ \kappa_{kl}^\gamma d^{1+\gamma}_{kl}\right) - K \right] \nonumber\\
  & - \sum_k \mu_k\left(\sum_{l,\langle k,l\rangle=1}  \kappa_{kl}(p_k-p_l)-q_k\right). \label{eq:Thetadefinition}
\end{align}
which has $V_N +1$ Lagrange multipliers: a set $\{\mu_k|k\in \Vn\}$ enforcing Kirchoff's first law on Neumann vertices (the set $\Vn$ with $|\Vn| = V_N$), and a single multiplier $\lambda$ that constrains the amount of energy that the organism can invest in pushing blood through the network and in maintaining the vessels that make up the network. The transport constraint is made up of two terms: $\sum \kappa_{kl}(p_k-p_l)^2$ represents the total viscous dissipation within the network, while $\sum \kappa_{kl}^\gamma d_{kl}^{\gamma+1}$ represents the total cost of maintaining the network (the {\it material constraint}), with $d_{kl}$ being the vessel length. The exponent $\gamma$ can be altered to embody different models for the cost of maintaining a network. In our default model (Equation \ref{eq:Hagen_Poiseuille}) conductance of an edge is proportional to the fourth power of its radius, so if the cost of maintaining a particular vessel is proportional to its surface area (and thus to its radius), then we expect $\gamma=1/4$, while if the cost is proportional to volume then $\gamma=1/2$. In general we need $\gamma\leq 1$ to produce well posed optimization problems (otherwise, the cost of building a vessel can be indefinitely reduced by subdividing the vessel into finer parallel vessels). Although in (\ref{eq:Thetadefinition}) we initially adopt the same material cost function definition as was used in previous work\cite{katifori2010damage,bohn2007structure}, we will go on to modify the cost function to incorporate networks in which vessels have different lengths, or in which the non-Newtonian rheology of real blood is modeled. Throughout, we incorporate a parameter $a>0$ that represents the relative importance of network maintenance and dissipation to the cost of maintaining the network. When presenting optimal networks, we will discuss the effect of varying $a$ (as well as asymptotic limits in which $a\to 0$) upon the network geometry. Since Murray's work on dissipation-minimizing networks\cite{murray1926physiological,murray1926physiologicalangle} is equivalent to minimizing this constraint function, we will adopt the shorthand of calling the network cost term the {\it Murray constraint}.

Table 1 gives a systematic description of previous work on minimizing functionals across networks, as well as outlining the new results that will be presented here on the optimization of (\ref{eq:Thetadefinition}).

\section{Optimization of general functions on a network by gradient descent\label{sec:algorithm}}

At any local minimum of $\Theta$, each of the partial derivatives of (\ref{eq:Thetadefinition}) must vanish. In order to locate such points, we adopt a gradient descent approach, in which $\kappa_{kl}$ are treated as adiabatically changing variables. That is: $\frac{\partial \Theta}{\partial \kappa_{kl}}$ is calculated, and an optimal perturbation of the form $\delta \kappa_{kl} = -\alpha \frac{\partial \Theta}{\partial \kappa_{kl}}$ is applied to ensure $\Theta$ decreases each time the conductances in the network are updated. At the same time, the other variables in the system, namely $\{p_k,\mu_k,\lambda\}$, are assumed to vary much more rapidly, to remain at a local equilibrium, so that:
\begin{equation}
\frac{\partial\Theta}{\partial p_k} =\frac{\partial\Theta}{\partial \mu_k} =\frac{\partial\Theta}{\partial \lambda}=0~ \label{eq:adiabatic}.
\end{equation}
Our ability to perform gradient descent therefore hinges on our ability to solve the system of $2V_N +1$ equations (\ref{eq:adiabatic}) for each set of conductances $\{\kappa_{kl}\}$ that the network passes through on its way to the local minimum. Fortunately it turns out that only one nonlinear equation in a single unknown variable needs to be solved for to solve all of the conditions (\ref{eq:adiabatic}); the other equations are linear and can be solved with relatively low computational cost.

Because we will consider multiple variants of the Murray constraint, in what follows we will write the summand that enforces the Murray constraint in the general form: $\lambda g(\{p_k\},\{\kappa_{kl}\})$. Then the condition that $\frac{\partial \Theta}{\partial \mu_k}=0$, $k\in \Vn$, merely enforces the system of mass conservation statements at each Neumann-vertex in the network (\ref{eq:KirchoffI}). These equations represent a discretized form of the Poisson equation and can be solved by inverting a sparse $V_N \times V_N$ matrix with $O(E,V_N)$ entries\cite{LP:book}. That is, we write:
\begin{equation}
Dp = f
\end{equation}
where $f_k = q_k$ is the prescribed inflow at Neumann vertices and $f_k = \tilde{p}_k$, the prescribed pressure at Dirichlet vertices. $-D$ is a form of graph Laplacian:
\begin{equation}
D_{kl} \doteq \left\{\begin{array}{llll}
\sum_{l,\langle k,l\rangle =1} \kappa_{kl} & k=l, k\notin \Vd\\
-\kappa_{kl} & \langle k,l\rangle =1, k\notin \Vd\\
\kappa\ub{1} & k=l, k\in \Vd\\
0 & \textrm{otherwise}
\end{array}\right.
\label{eq:graph_Laplacian}
\end{equation}
where $\kappa\ub{1} = 1$. (For any $\kappa\ub{1}\neq 0$ $D$ is full rank; we will make use of other positive constant values for $\kappa\ub{1}$ later.)

To solve for $\{\mu_k\}$, we consider the system of equations $\frac{\partial\Theta}{\partial p_k} = 0$, $k\in \Vn$:
\begin{equation}
0 = \left(\frac{\partial f}{\partial p_k} +\lambda \frac{\partial g}{\partial p_k}\right)- \sum_{l, \langle k,l\rangle =1} (\mu_k-\mu_l)\kappa_{kl}. \label{eq:muequation}
\end{equation}
If $\lambda$, $\{p_k\}$ and $\{\kappa_{kl}\}$ are all known then these equations again take the form of a discrete Poisson equation, however, just as with the solution of the pressure equation, these equations themselves do not admit unique solutions unless a reference value of $\mu_k$ is established. If $\Vd \neq \phi$, i.e. if pressure is specified at least one vertex within $(\V,E)$ then $\mu_k = 0 \; \forall k\in \Vd$ and the $\mu_k$ equations admit a unique solution; otherwise $\mu_k$'s are determined up to a constant (see \ref{App:mu_solve}). For some forms of target function $f$ and constraint function $g$, we will show that $\mu_k$'s for the minimizer are directly related to the pressures, with no need to solve the Poisson equation by a separate matrix inversion.

However, to use Equation (\ref{eq:muequation}) to solve for $\mu_k$ it is still necessary to know the Lagrange multiplier that enforces the Murray constraint (i.e. $\lambda$). The simplest way to derive $\lambda$ is to dictate that the variational of the constraint function should vanish when $\kappa_{kl}$ is updated since the constraint function should remain constant when its variational under changes in conductances, i.e.:
\begin{equation}
0 =  \sum_{k\notin \Vd} \frac{\partial g}{\partial p_k}\delta p_k +\sum_{k>l, \langle k,l\rangle =1} \frac{\partial g}{\partial \kappa_{kl}}\delta \kappa_{kl}
\label{eq:constitutive_variation}
\end{equation}

\noindent (we set $\delta p_k = 0$ if $k\in \Vd$) where

\begin{equation}
\delta\kappa_{kl} = -\alpha \frac{\partial \Theta}{\partial \kappa_{kl}} = -\alpha\left( \frac{\partial f}{\partial\kappa_{kl}}+\lambda \frac{\partial g}{\partial\kappa_{kl}} - \kappa_{kl} (\mu_k-\mu_l)(p_k-p_l)\right).
\label{eq:delkappa}
\end{equation}
At this point $\{\delta p_k\}$ and $\{\mu_k\}$ are undetermined. The lagrange multipliers $\{\mu_k\}$ can be solved in terms of the still unknown $\lambda$ from (\ref{eq:muequation}) (see \ref{App:mu_solve}). The $\{\mu_k\}$ are linear functions of $\lambda$ since (\ref{eq:muequation}) is a linear system. To obtain $\delta p_k$ for each $k\in \Vn$ we calculate the variational in Kirchhoff's first law:

\begin{equation}
\sum_{l, \langle l,k\rangle = 1} \delta\kappa_{kl}(p_k-p_l) + \kappa_{kl}(\delta p_k - \delta p_l) =0. \label{eq:Kirchfirstvar}
\end{equation}

\noindent When written in matrix form, the matrix multiplying $\{\delta p_k\}$ is again the negative of the graph Laplacian, $-D$. Thus $\{\delta p_k\}$ can be solved in terms of $\lambda$ so long as the original matrix system is solvable for $\{p_k\}$. Since $\{\mu_k\}$ are linear in $\lambda$, $\{\delta p_k\}$ are also linear in $\lambda$, which implies that the right hand side of Equation (\ref{eq:constitutive_variation}) is linear in $\lambda$. Therefore $\lambda$ can be solved in closed form from Equation (\ref{eq:constitutive_variation}), and the optimal variation $\delta\kappa_{kl}$ can be determined from equation (\ref{eq:delkappa}).

With $\{p_k\}$, $\{\mu_k\}$, and $\lambda$ solvable given $\{\kappa_{kl}\}$ we can perform gradient descent using Equation (\ref{eq:delkappa}) and numerically approach a minimizer. However our descent method has the following limitations: 1. For finite step sizes $\alpha$, conductances may drop below 0 when perturbed according to Equation (\ref{eq:delkappa}). 2. The method only conserves the Murray function up to terms of $O(\delta\kappa)$. 

To avoid negative conductances we truncate at a small positive value $\epsilon$ at each step, i.e. set:

\begin{equation}
\kappa\ub{n+\frac{1}{2}}_{kl} = \max\{\kappa\ub{n}_{kl} - \alpha \frac{\partial\Theta}{\partial\kappa_{kl}},\epsilon\}.
\end{equation} 

To ensure that the constraint is exactly obeyed we then project the conductances $\{\kappa_{kl}\ub{n+\frac{1}{2}}\}$ onto the constraint manifold $g(\{p_k\},\{\kappa_{kl}\}) = 0$, via a projection function:

\begin{equation}
\kappa\ub{n+1}_{kl} = h(\kappa\ub{n+\frac{1}{2}}_{kl}) \quad \forall \langle k,l\rangle =1, k>l.
\end{equation}

\noindent Throughout this work we consider three possible projection functions: One choice is to project according to the normal of the constraint surface:

\begin{equation}
\kappa\ub{n+1}_{kl} = \kappa\ub{n+\frac{1}{2}}_{kl} - \beta\frac{\partial g}{\partial \kappa_{kl}}(\{p_k\ub{n+\frac{1}{2}}\},\{\kappa_{kl}\ub{n+\frac{1}{2}}\}),\quad \forall \langle k,l\rangle =1,k>l
\end{equation}

\noindent The value of $\beta$ must be chosen numerically to ensure that $g(\{p\ub{n+1}_k\}, \{\kappa_{kl}\ub{n+1}\}) = 0$ exactly. This entails recomputing the pressure distribution $\{p\ub{n+1}_k\}$ for each $\beta$ value, and secant search on $\beta$ to obtain the root. Another approach we have followed is varying the parameter $\lambda$. This method has comparable complexity to projection on $\{\kappa\ub{n+\frac{1}{2}}_{kl}\}$; since the $\{\mu_k\}$ depend linearly on $\lambda$ via Equation (\ref{eq:muequation}), $\{\kappa\ub{n+1}_{kl}\}$ depends linearly on the parameter $\lambda$. However, just as with the projection method, we must still recompute the $\{p\ub{n+1}_k\}$ for each trial set of $\{\kappa\ub{n+1}_{kl}\}$. Moreover, for some target functions $f$ or constraint functions $g$, it is difficult to derive closed-form expressions for $\lambda$ (i.e. to calculate the partial derivatives $\frac{\partial f}{\partial p_k}$ and $\frac{\partial g}{\partial p_k}$). In this case $\lambda$ may only be computed numerically, by solving $g(\{p_k\ub{n+1}(\lambda)\},\{\kappa_{kl}\ub{n+1}(\lambda)\}) = 0$. A third approach that we have adopted is to simply scale the conductances:

\begin{equation}
\kappa\ub{n+1}_{kl} = \beta\kappa\ub{n+\frac{1}{2}}_{kl},\quad \forall \langle k,l\rangle =1,k>l
\label{eq:scaleproj}
\end{equation}

\noindent where $\beta$ is chosen to satisfy the Murray constraint. This method produces theoretically suboptimal corrections on the conductances, but it is typically easy to compute a value of $\beta$ that satisfies the Murray constraint. In particular, under certain boundary conditions, e.g. $p_k = \bar{p}, \; \forall k\in \Vd$ within each connected component of the network meaning that all pressure vertices within a single connected component have the same imposed pressures, a rescaling of the conductances throughout the network leaves the fluxes on each edge unaffected. In this case, the dissipation decreases in inverse proportion to $\beta$, while the maintenance cost increases proportionately to $\beta^\gamma$.

\section{Minimizing dissipation \label{sec:sqgrid}}

\subsection{Single source, single sink networks \label{subsec:sing_source_sing_sink}}

As a first test for our optimization method we recompute dissipation minimizing networks; that is we set $a=0$, so our constraint function only reflects the total material cost of the network, and set the target function equal to $\sum_{k>l,\langle k,l\rangle =1} \kappa_{kl}(p_k-p_l)^2$ so that our algorithm finds the minimal dissipation among all networks built using a given quantity of material. Our base network is a square grid (Fig.~\ref{fig:diss_gamma_1s2_square_grid}A). In addition to allowing for simple vertex indexing, this architecture resembles the regular capillary bed networks observed, for example in the rat gut\cite{tuma2003transcytosis}. We impose an inflow boundary condition on the upper left corner and a fixed zero pressure on the lower right corner. The dissipation-minimizing network is a single geodesic (i.e. path) between source and sink, allowing us to benchmark our optimization method's ability to find known global optima. To test our gradient descent method we form the function:

\begin{equation}
\Theta = \sum_{\langle k,l\rangle =1, k>l}  \kappa_{kl} (p_k-p_l)^2 + \lambda(\sum_{\langle k,l\rangle =1, k>l} \kappa_{kl}^\gamma - K^\gamma) - \sum_{k\notin \Vd} \mu_k \Big(\sum_{l,\langle k,l\rangle =1} \kappa_{kl}(p_k-p_l) - q_k\Big).
\label{func:diss}
\end{equation}
Here we ignore $d_{kl}$ since we assume all the vessels have the same length which may be scaled to 1 by choice of units. The adiabatic variation of $p_k$ and $\mu_k$ is derived from

\begin{equation}
\frac{\partial\Theta}{\partial p_k} = \sum_{l,\langle k,l\rangle =1} 2\kappa_{kl}(p_k-p_l) - \sum_{l,\langle k,l\rangle =1} \kappa_{kl}(\mu_k-\mu_l),\qquad k\notin \Vd
\label{eq:diss_partialpressure}
\end{equation}

\noindent and the fixed pressure boundary condition on pressure nodes allows us to specify that:

\begin{equation}
\mu_i = 0\qquad \forall i\in \Vd
\end{equation}

\noindent The $\mu_k$ are therefore solving a variant of the Kirchhoff flux conservation equations:

\begin{equation}
D \mu = 2 D p
\end{equation}
with $D$ as defined in Equation \ref{eq:graph_Laplacian}.

This system can be solved for $\mu_k$ under the same conditions as the presure equations being solvable (see \ref{App:mu_solve}). In particular if, as here, the only pressure boundary conditions imposed at vertices in $\Vd$ are of the form $p=0$, then $\mu_k = 2p_k , \forall k \in \V$, i.e. $\mu_k$'s exactly represent the pressures for a stationary network. Now we calculate the derivatives with respect to the conductances:
\begin{equation}
\frac{\partial\Theta}{\partial\kappa_{kl}} = (p_k-p_l)^2 + \lambda \gamma \kappa_{kl}^{\gamma-1} - (\mu_k-\mu_l)(p_k-p_l) = \lambda\gamma\kappa^{\gamma-1}_{kl} - (p_k-p_l)^2.
\label{eq:diss_partialkappa}
\end{equation}
In general we determine $\lambda$ from Equations (\ref{eq:constitutive_variation},\ref{eq:delkappa},\ref{eq:Kirchfirstvar}). However the constraint function $g$ is independent of $\{p_k\}$ in this case, so Equation (\ref{eq:constitutive_variation}) becomes
\begin{equation}
0 = \sum_{k>l,\langle k,l\rangle =1} \frac{\partial g}{\partial\kappa_{kl}} \delta \kappa_{kl}
\label{eq:constitutive_variation_no_p}
\end{equation}
and we can solve $\lambda$ directly in terms of $\{p_k\},\{\kappa_{kl}\}$:
\begin{equation}
\lambda = \frac{\sum_{\langle k,l\rangle =1,k>l} \kappa^{\gamma-1}_{kl}(p_k-p_l)^2}{\sum_{\langle k,l\rangle =1,k>l} \gamma \kappa^{2\gamma-2}_{kl}}
\label{eq:diss_lam}
\end{equation}

\begin{figure}[h]

	\begin{center}
		\includegraphics[width=10 cm]{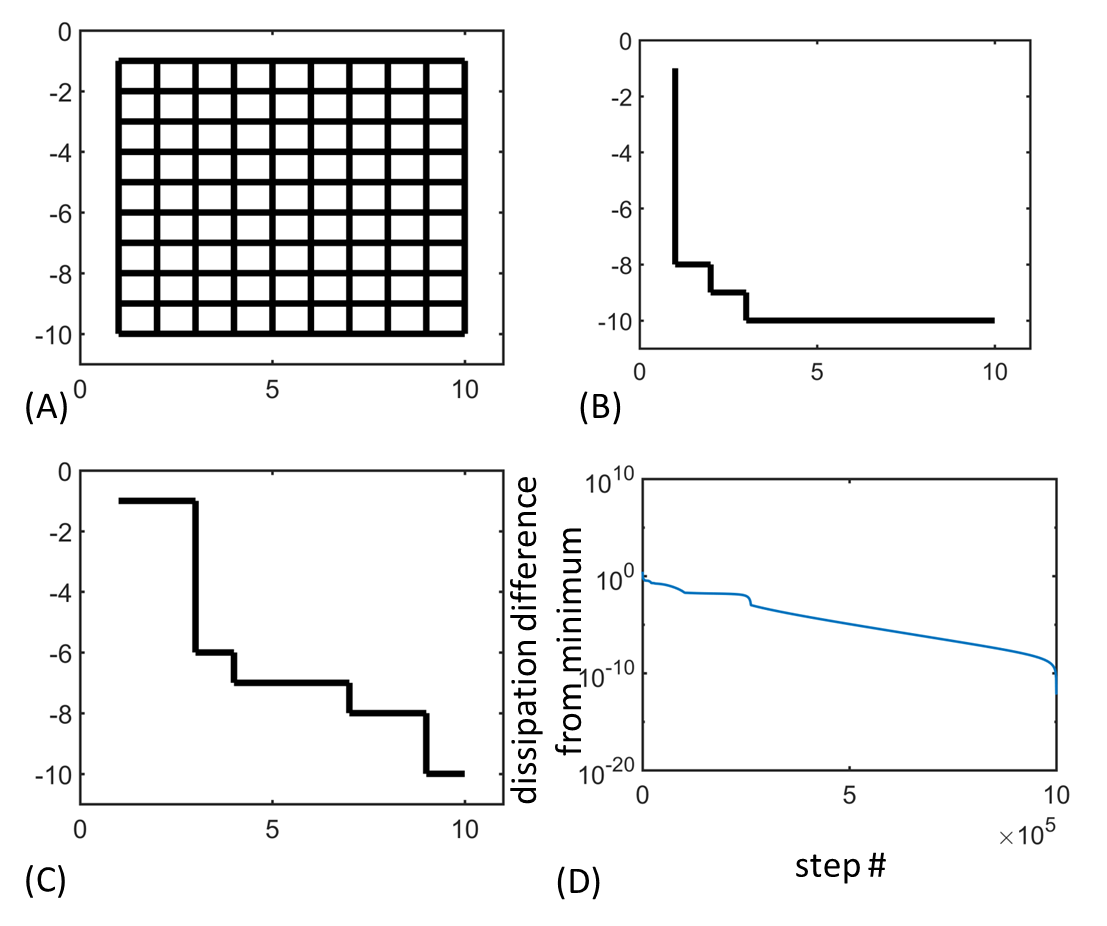}
\caption{Optimal network of the dissipation $\sum \kappa_{kl}(p_k-p_l)^2$ with material constraint $\sum \kappa^\gamma_{kl} = K^\gamma$ and $\gamma = \frac{1}{2}$ on a 10$\times$10 square grid. (A) We represent the capillary bed network by a square grid. (B, C) Different initial conductances produce different optimal networks, but all optimal networks are made of a single wide conduit. Here we use a constant step size throughout the process, and at each step we project by surface normal to maintain the material constraint. (D) The gradient descent algorithm shows a linear convergence, as shown by the dissipation time course of (C).}

\label{fig:diss_gamma_1s2_square_grid}
\end{center}

\end{figure}

\noindent As described in Section \ref{sec:algorithm} we project $\{\kappa_{kl}\}$ along $\frac{\partial g}{\partial\kappa_{kl}} = \gamma \kappa_{kl}^{\gamma-1}$ after each step of the algorithm. At each step of the algorithm, we solve for the pressures $p_k$ from the conductances $\{\kappa_{kl}\}$, then the $\mu_k$, and then descend according to Eqn.~(\ref{eq:diss_partialkappa}). Assuming that $\gamma < 1$, our algorithm deletes edges and concentrates conductance on a single linked path of edges that connects source with sink (Fig.~\ref{fig:diss_gamma_1s2_square_grid}B, C). Any linked path that follows one of the equivalent shortest paths from source to sink will minimize dissipation and accordingly different distributions of random initial conductances converge to different optimal networks. Convergence is linear (Fig.~\ref{fig:diss_gamma_1s2_square_grid}D).

\begin{figure}[h]

	\begin{center}
		\includegraphics[width=10 cm]{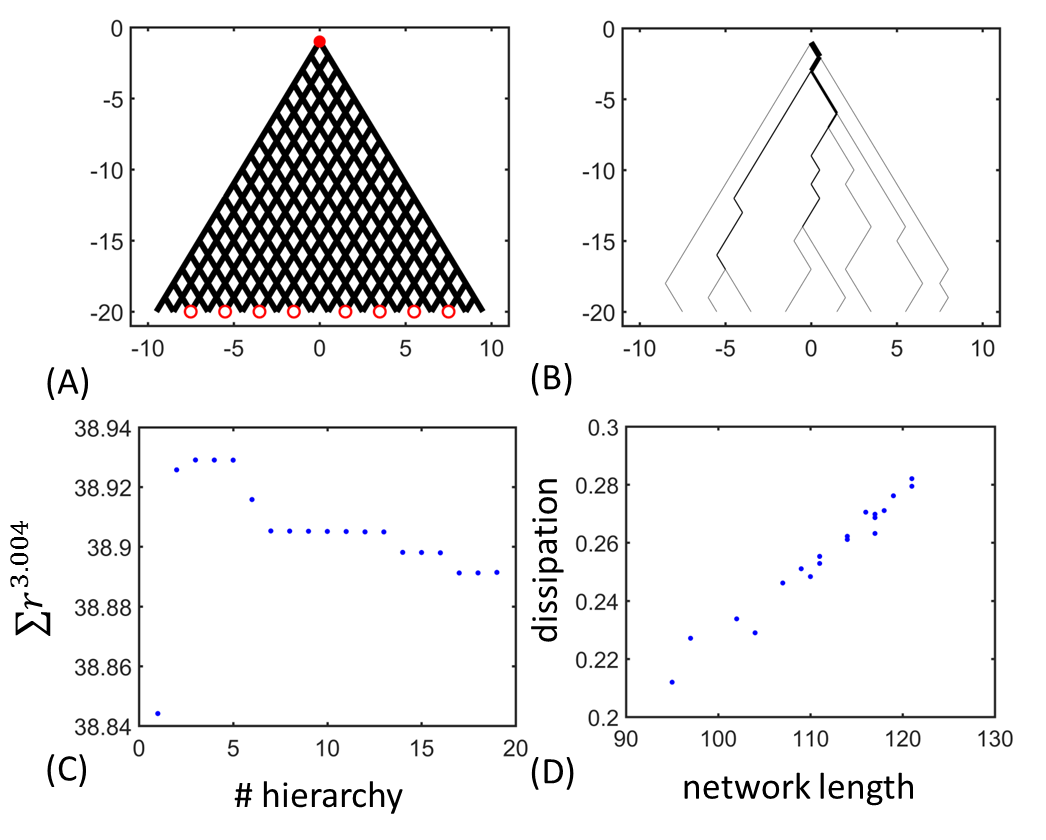}
\caption{Branching network of the dissipation functional $\sum \kappa_{kl}(p_k-p_l)^2$ with material constraint $\sum \kappa^\gamma_{kl}-K^\gamma$ and $\gamma = \frac{1}{2}$. (A) We use a branching grid as our basic topology. There are $N=20$ layers of nodes and a total of $380$ links, connecting a single source (red filled circle) with 8 sinks (red open circles). (B) A minimal dissipative network calculated by gradient descent method. We imposed a fixed zero pressure on the top node and $8$ evenly distributed outflows on the bottom. (C) Murray's law is obeyed by the minimal dissipative network, indicated by the nearly constant sum of radius to an exponent $3.004$ among different hierarchies in network shown in (B). (D) The network length and dissipation vary between different local optima and are strongly correlated with each other (correlation coefficient $r=0.98$).}

\label{fig:diss_gamma_1s2_branch}
\end{center}

\end{figure}

\subsection{Minimizing dissipation with distributed sinks\label{sec:FL}}

The ability of the optimization algorithm to identify shortest distance paths between source and sink is a useful sanity check, but a real test of the algorithm requires that we evaluate its ability to produce known branching tree structures\cite{bohn2007structure,durand2007structure} when the network distributes blood between a single source and multiple, dispersed sinks. We simulate such a network by splitting the grid representing the capillary network in half along the diagonal. The source continues to be one corner of the square, and we space out a number of sinks, with equal output fluxes, along the diagonal (Fig.~\ref{fig:diss_gamma_1s2_branch}A). To make the pressure equation solvable we set pressure at the top-most (source) vertex in the network to $p_1 = 0$. Sink nodes each have prescribed outflows.

\begin{figure}[H]

	\begin{center}
		\includegraphics[width=10 cm]{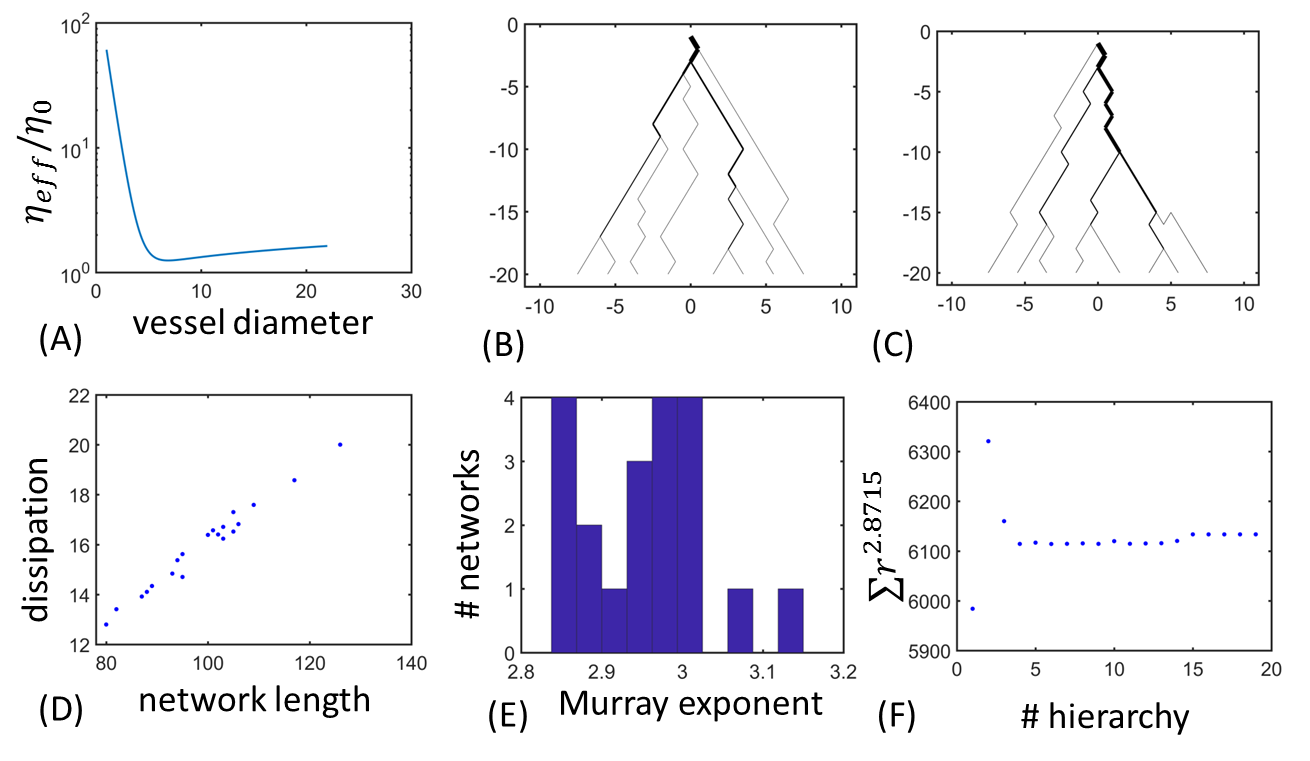}
\caption{Branching network of the dissipation functional with cost functions modified by the Fahraeus-Lindqvist effect. (A) The ratio between effective blood viscosity and plasma viscosity in rodents rises sharply as vessel diameter decreases from around $3\mu m$. (B, C) The optimal networks have tree structure similar to their Newtonian counterparts. We imposed a fixed zero pressure on the top node and $8$ evenly distributed outflows on the bottom. There are $20$ hierarchies and a total of $380$ links. The Fahraeus-Lindqvist effect was modeled according to Pries et al.\cite{pries2005microvascular} and we adjust the multiplicative constant so that a vessel of $6 \; \mu m$ in diameter corresponds to unit conductance, which is the mean of randomized initial conductances. (D) Again the gradient descent method finds local minima and the dissipation is strongly correlated with the network length (correlation coefficient $r = 0.99$). (E) The Murray exponents are in general lower than $3$, the value derived by Murray, and have a wider distribution with mean and standard deviation of $2.95\pm 0.081$. We calculate the exponent based on minimizing the coefficient of variation (CV) in terms of the exponent, and exponents for $20$ optimal networks starting from random initial conductances are plotted. The material constraints are fixed to the same value $\sum_{k>l,\langle k,l\rangle =1} D_{kl}^2 = 1.39\times 10^4$. (F) The power law fits reasonably well to the optimal networks with the Fahraeus-Lindqvist effect, indicated by the sum of radii in each hierarchy to the best exponent $2.87$ for constant fit used for network in (C).}

\label{fig:diss_FL_branch}
\end{center}

\end{figure}

Initially we assume the Hagen-Poiseuille law holds in each edge, so $\Theta$ takes the form specified in Equation (\ref{func:diss}); and we follow the same method for updating conductances as in \ref{subsec:sing_source_sing_sink}. Optimal networks take the form of hierarchical branching trees (i.e. loopless networks\cite{durand2007structure}) (Fig.~\ref{fig:diss_gamma_1s2_branch}B) in which thicker vessels bifurcate into narrower vessels, and thence into even narrower vessels similar to Bohn et al.\cite{bohn2007structure}. We can quantitatively test for the ability of our algorithm to produce locally optimal networks by checking that the networks that it converges to obey Murray's law\cite{murray1926physiological,sherman1981connecting} which states that the flow in each vessel in an optimal dissipation network is proportional to the cube power of the radius of the vessel. Since the total flows through each level $y =$ constant must be equal, Murray's law implies that the sum of the cube of the radii of vessels passing through each level should be equal. To test for local optimality, we calculate a Murray exponent by finding the exponent $a$ that minimizes the variance on $\sum r^a_i$ where sums are taken over each edge in the same level of the network (Fig.~\ref{fig:diss_gamma_1s2_branch}C). The Murray exponents are tightly clustered around 3 ($3.01\pm 0.03$), which agree will with the theoretical value. 

Although our algorithm always converged to a locally optimal transport network, different initial configurations ultimately converged to different optima, consistent with previous results showing that the dissipation function landscape is rough with many local optima. To map out this landscape we measure the total length of the network for different local optima. Total length can be a measure of whether the branch points are concentrated near the source (i.e. small $|y|$, producing longer networks) or near the sinks (i.e. large negative $y$, producing shorter networks). The total length has a large variation among optimal networks and also correlates strongly with the dissipation (Fig.~\ref{fig:diss_gamma_1s2_branch}D, $r = 0.98$). This suggests that while a network with larger total length could be a local minimum, the dissipation can be decreased by a topological change that decreases the number of links, though this requires moving away from the local minimum. This suggests that the roughness of the dissipation landscape is high and a strategy of global optimization such as combining gradient descent with simulated annealing must be implemented to find the global minimal dissipation network (see Katifori et al.\cite{katifori2010damage} and below).

Although the assumption that each blood vessel obeys the Hagen-Poiseuille law is a useful default model, the non-Newtonian nature of blood means that in vessels of different diameters, blood may have very different apparent viscosity. In particular the finest vessels in a cardiovascular network are typically comparable in size to the red blood cells they transport. Red blood cells therefore occlude fine vessels, increasing the effective resistance of these vessels. At the same time, in larger vessels, red blood cells tend to self-organize to flow in the center of the vessel, leaving low viscosity layers of plasma adjacent to the vessel walls, reducing resistance to flow in those vessels. It is usual to incorporate these effects into models of vessel conductance by continuing to assume the Hagen-Poiseuille law (Eqn.~(\ref{eq:Hagen_Poiseuille}))

\begin{equation}
\kappa = \frac{\pi D^4}{128\etaeff(D,\phi) \ell},
\end{equation}

\noindent where $D,\ell$ are the diameter and the length of the vessel and the effective viscosity, $\etaeff$, is as a function of vessel diameter and of the concentration (i.e. volume fraction) of blood cells, $\phi$ \cite{pries2005microvascular}. Assuming that red blood cells are partitioned in the same ratio as the ratio of whole blood fluxes at points at which vessels divide, we may assume that the red blood cell concentration is constant through the network. This assumption excludes the effect of the Zweifach-Fung effect, in which the finite size of red blood cells reduces their probability of entering fine vessels, so that larger vessels tend to also contain higher concentrations of red blood cells\cite{pries1989red,pries2005microvascular}. However our own studies of the zebrafish microvasculature show that hematocrit varies only weakly between micro-vessels while conductance variation between similar vessels (such as between different trunk intersegmental vessels) may exceed a factor of 39. Accordingly we incorporate  an empirical model for the dependence of viscosity upon vessel diameter only. Pries and Secomb\cite{pries2005microvascular} measured apparent viscosity of red cell suspensions by analyzing flow of rodent blood through glass capillaries and found that the effective viscosity could be fit empirically by a function:

\begin{equation}
\etaeff(D) = \left[ 220\exp(-1.3D)+3.2 -2.44\exp(-0.06D^{0.645})\right] \eta_0.
\label{eq:FL_0.45}
\end{equation}

\noindent Here the vessel diameter, $D$, is measured in microns, and $\eta_0$ is the plasma viscosity, which is comparable to water $\eta_0 \approx 1 \; cP$). The functional dependence of $\etaeff$ upon vessel diameter, $D$, is shown in Fig.~\ref{fig:diss_FL_branch}A. 

 We expect Equation (\ref{eq:FL_0.45}) to present a good fit only for blood suspensions where the cell radius and hematocrit are comparable to the experiments of Pries and Secomb. It does not apply therefore to the zebrafish network which we study in Section \ref{sec:zebrafish}. However our algorithm is flexible enough to be able to include different functions in place of Equation (\ref{eq:FL_0.45}): We expect qualitatively similar conclusions to hold for different models for the Fahraeus-Lindqvist effect. Incorporating the Fahraeus-Lindqvist effect requires that we rewrite the material constraint since we can no longer simply obtain the radius, and thus volume, of a vessel from its length and conductance. Instead we write:

\begin{equation}
g(\{\kappa_{kl}\}) =  \sum_{k>l, \langle k,l\rangle =1} d_{kl} D(\kappa_{kl},d_{kl})^2 - K
\label{eq:FL_constraint}
\end{equation}

\noindent where $D(\kappa,d)$ maps from the conductance and length of a vessel to its diameter $D$ (we neglect the factor $\frac{\pi}{4}$ since we can absorb it into $K$). We continute to assume that the vessels all have the same length $\ell = 1$ so we can write $D\equiv D(\kappa)$. Numerically we find that $\kappa(D)$ is an increasing function so the inverse function $D(\kappa)$ exists. The change in cost function does not affect $\frac{\partial\Theta}{\partial p_k}$, so $\mu_k = 2p_k \; \forall 1\leq k\leq V$ still holds. However the conductance derivatives now change to:

\begin{equation}
\frac{\partial \Theta}{\partial \kappa_{kl}} = 2\lambda D_{kl} D'(\kappa_{kl}) - (p_k - p_l)^2 = 2\lambda \frac{D_{kl}}{\kappa'(D_{kl})} - (p_k-p_l)^2
\label{eq:FL_diss_partialkappa}
\end{equation}

\noindent where $D_{kl}$ are the diameters corresponding to $\kappa_{kl}$ according to Equation (\ref{eq:FL_0.45}). $\lambda$ can be solved solely from Equation (\ref{eq:constitutive_variation_no_p}):

\begin{equation}
\lambda = \frac{\sum_{\langle k,l\rangle=1,k>l} \frac{(p_k-p_l)^2}{\kappa'(D_{kl})} D_{kl}}{\sum_{\langle k,l\rangle=1,k>l} \frac{2}{\kappa'(D_{kl})^2} D^2_{kl}}.
\end{equation}

\noindent The projection works in the same manner:

\begin{equation}
n_{kl} = 2 D_{kl} D'(\kappa_{kl}) = \frac{2 D_{kl}}{\kappa'(D_{kl})}.
\end{equation}

For these networks we found a much larger number of local optima than when flow convergence was assumed to be Newtonian. To deal with these optima and accelerate convergence we adopt one part of the simulated annealing method of Katifori et al.\cite{katifori2010damage}. Specifically, when the change in conductance ($\max\{|\kappa\ub{n+1}-\kappa\ub{n}|\}$) becomes too small (in practice we adopt a thrshold of $10^{-3}$, then we multiply all conductances (above threshold $\epsilon$) in the network by a multiplicative noise. Then among all the local minimum visited we select the network with the smallest dissipation. The morphology of non-Newtonian minimally dissipative networks qualitatively resembles Newtonian ones in the sense that they are trees (Fig.~\ref{fig:diss_FL_branch}B, C). A strong correlation between dissipation and the total length of the network is again observed (Fig.~\ref{fig:diss_FL_branch}D, $r=0.99$). Here the material of an edge is no longer a certain power of conductance, which is the basis for the original derivation of Murray's law\cite{murray1926physiological,sherman1981connecting}. Therefore we expect that the Murray's exponent, defined again by minimization of variance in $\sum_i r^a_i$, might be far from the theoretical value for Newtonian minimially dissipative networks. However we find that the here the Murray's exponents ($2.95 \pm 0.081$) are quite close to 3, the theoretical value for Newtonian networks, and the sum $\sum_i r^a_i$ is well approximated by constant with the optimized exponent $a$ (Fig.~\ref{fig:diss_FL_branch}E, F). It has been proven for Newtonian flow\cite{durand2007structure} under general boundary conditions\cite{chang2017minimal} that optimal networks are simply connected. However this proof hinges on the fact that Newtonian flows within a network minimize dissipation (or a related quantity called the complementary dissipation\cite{chang2017minimal}). This result does not directly translate to the non-Newtonian flows, including the one described by Equation (\ref{eq:FL_0.45}). Our numerical result supports that minimally dissipative networks with the Fahraeus-Lindqvist effect are trees and satisfy Murray's law, but further theoretical work will be needed to
confirm that this model for the Fahraeus-Lindqvist effect always produces simply connected optimal networks, or to show that optimal networks are generally simply connected even when other non-Newtonian features of blood (such as the Zweifach-Fung effect) are incorporated.

\section{Optimizing uniformity of flow} \label{sec:unif_flow}

\subsection{Optimizing uniformity of flow with material constraint\label{sec:Qsqanal}}

Analyzing minimal dissipation on networks allowed us to compare the performance of the algorithm described in this paper with previous work. We now turn to other target functions that have not been extensively studied. At the level of micro-vessels it is likely that oxygen perfusion rather than transport efficiency is the dominant principle underlying network organization. Indeed our own studies of the embryonic zebrafish trunk vasculature\cite{chang2015optimal} showed that red blood cells are uniformly partitioned among different trunk microvessels, and that the "cost" of uniform perfusion (in the sense of the increase in dissipation over a uniform network that did not uniformly perfuse the trunk) was an 11-fold increase in dissipation. We therefore frame this question more generally, i.e. ask what organization of vessels achieves a given amount of flow $\bar{Q}$ on all links or equivalently, how the flow variation

\begin{equation}
f(\{p_k\},\{\kappa_{kl}\}) = \sum_{\langle k,l\rangle = 1, k>l} \frac{1}{2}(Q_{kl} - \bar{Q})^2
\end{equation}

\noindent may be minimized by optimal choice of conductances $\kappa_{kl}$. We can expand the function $f$ and abandon the constant term:

\begin{equation}
f(\{p_k\},\{\kappa_{kl}\}) = \sum_{k>l, \langle k,l\rangle = 1} \Big(\frac{1}{2}Q^2_{kl} - \bar{Q}Q_{kl} \Big).
\end{equation}

\noindent Under the assumption that the total flow on all edges is conserved, i.e.:

\begin{equation}
\sum_{\langle k,l\rangle =1,k>l} Q_{kl} = C 
\label{eq:hierarchy_assumption}
\end{equation}

\noindent the function f can be reduced to 

\begin{equation}
f(\{p_k\},\{\kappa_{kl}\}) = \sum_{\langle k,l\rangle = 1, k>l} \frac{1}{2}Q^2_{kl}
\end{equation}

\noindent by ignoring constants. The assumption (\ref{eq:hierarchy_assumption}) is valid in networks provided that the network may be divided into levels: that is a series of control surfaces may be constructed between source and sink, with no pair of control surfaces intersecting and each edge intersected by one control surface (Fig.~\ref{fig:diagram_control_surface}). Then since the total flow across each control surface is the same, the total flow over all network edges is $\sum_{k>l,\langle k,l\rangle =1} Q_{kl} = SF$ where $F$ is the total sink strength and $S$ is the number of control surfaces. Both symmetric branching trees and quadrilateral grids (such as the one shown in Fig.~\ref{fig:diagram_control_surface}) are examples of networks having this property, and both can be used as simplified models of microvascular transport networks\cite{hu2012blood}. Without any constraint the function to be optimized can now be written as

\begin{equation}
\Theta = \sum_{\langle k,l\rangle =1,k>l} \frac{1}{2}(p_k-p_l)^2 \kappa_{kl}^2 - \sum_{k\in \Vn} \mu_k \left(\sum_{l,\langle k,l\rangle =1} \kappa_{kl}(p_k-p_l)-q_k \right).
\label{functional:unifsimp}
\end{equation}

Here we show that the optimal networks optimizing (\ref{functional:unifsimp}) have the same flow as the network with uniform conductances, although many different sets of conductances lead to the same set of flow.

\begin{figure}[h]

	\begin{center}
		\includegraphics[width=5 cm]{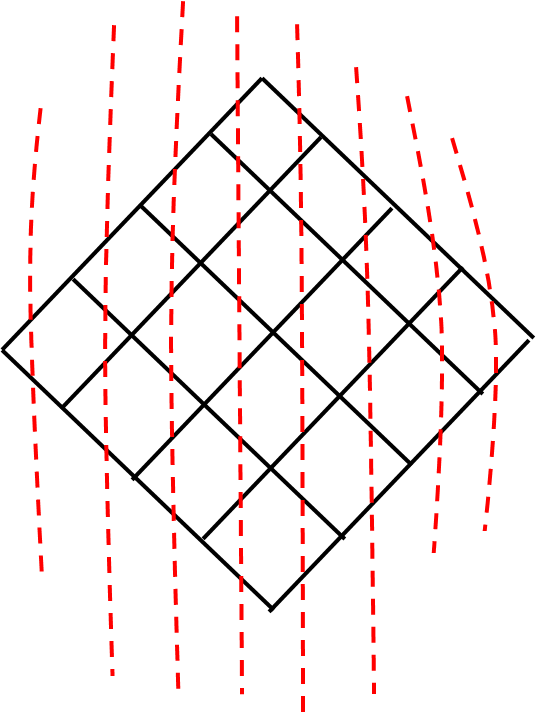}
\caption{A quadrilateral grid (black) can be divided using a set of non-intersecting control surfaces (red dashed lines) such that each edge in the grid is intersected by exactly one control surface.}

\label{fig:diagram_control_surface}
\end{center}

\end{figure}

\begin{theorem}
A stationary network of the functional (\ref{functional:unifsimp}) in which $p_k = 0 \;\forall k\in \Vd$ has the same set of flows as a uniform conductance network with the same support on edges. That is, suppose we let $\kappa_{kl}, Q_{kl}$ be the conductances and flows on the stationary network, and $\kappa'_{kl}, Q'_{kl}$ be those on the uniform conductance network, i.e.

\begin{equation}
\kappa'_{kl} = \left\{\begin{array}{ll}
1 & \mbox{if $\langle k,l\rangle =1$ with $\kappa_{kl}>0$}\\
0 & \mbox{if $\langle k,l\rangle =1$ with $\kappa_{kl}=0$}\\
\end{array}\right..
\end{equation}
Then

\begin{equation}
Q_{kl} = Q'_{kl} \qquad \forall \langle k,l\rangle =1.
\end{equation}

\end{theorem}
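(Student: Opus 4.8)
The plan is to read the result directly off the three stationarity conditions of $\Theta$, treating the Lagrange multipliers $\mu_k$ as the pressures of the uniform network. First I would record the conditions obtained by setting each partial derivative of (\ref{functional:unifsimp}) to zero. Writing $Q_{kl}=\kappa_{kl}(p_k-p_l)$ and adopting the convention $\mu_k=0$ for $k\in\Vd$, the condition $\partial\Theta/\partial\mu_k=0$ is just Kirchhoff's first law, $\sum_{l:\langle k,l\rangle=1}Q_{kl}=q_k$ at every Neumann vertex, so the flows of a stationary network are conservative with the prescribed sources. The condition $\partial\Theta/\partial\kappa_{kl}=0$ on a support edge factors as
\begin{equation}
(p_k-p_l)\bigl(Q_{kl}-(\mu_k-\mu_l)\bigr)=0,
\end{equation}
while $\partial\Theta/\partial p_k=0$ at a Neumann vertex rearranges (using $\sum_l(p_k-p_l)\kappa_{kl}^2=\sum_l\kappa_{kl}Q_{kl}$) to $\sum_{l}\kappa_{kl}\bigl(Q_{kl}-(\mu_k-\mu_l)\bigr)=0$.

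The heart of the argument is the identification $Q_{kl}=\mu_k-\mu_l$ on every support edge. Granting this for the moment, substituting into Kirchhoff's law gives $\sum_{l:\langle k,l\rangle=1}(\mu_k-\mu_l)=q_k$ at each Neumann vertex, with $\mu_k=0$ on $\Vd$. This is exactly the pressure problem for the unit-conductance network on the same support ($\kappa'_{kl}=1$ on support edges): a discrete Poisson equation with identical Dirichlet/Neumann data. Since the stationary network's flows are defined, its support satisfies the solvability criterion stated in Section \ref{sec:setup}, so this problem has a unique solution; hence $\mu_k=p'_k$ and $Q_{kl}=\mu_k-\mu_l=p'_k-p'_l=Q'_{kl}$, which is the claim.

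The main obstacle is justifying $Q_{kl}=\mu_k-\mu_l$ on the degenerate edges where $p_k=p_l$, since there the factored $\kappa$-condition is vacuous (it reads $0=0$) and only forces $Q_{kl}=0$. To close this gap I would restrict the $p$-stationarity identity $\sum_l\kappa_{kl}(Q_{kl}-(\mu_k-\mu_l))=0$ to the subgraph $H$ of support edges with $p_k=p_l$: on the non-degenerate neighbours the summand already vanishes, so at each Neumann vertex the remaining terms give $\sum_{l\in N_H(k)}\kappa_{kl}(\mu_k-\mu_l)=0$. This is a source-free, $\kappa$-weighted discrete Laplace equation on each connected component of $H$; pairing it with $\mu_k$ and summing produces the Dirichlet energy $\tfrac12\sum_{\langle k,l\rangle\in H}\kappa_{kl}(\mu_k-\mu_l)^2=0$, forcing $\mu$ to be constant on each component of $H$. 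Hence $\mu_k-\mu_l=0=Q_{kl}$ on degenerate edges too. (Components of $H$ meeting $\Vd$ need a brief separate check, but there $\mu$ is pinned to $0$ and the same energy identity applies.)

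As a consistency check, the conclusion is the discrete analogue of Thomson's principle: among all conservative flows on the support, the unit-conductance flow $Q'$ is the unique minimizer of the strictly convex functional $\tfrac12\sum Q_{kl}^2$. This both explains why the flows — but not the conductances — are pinned down (many $\{\kappa_{kl}\}$ realize the same $\{Q_{kl}\}$, as the theorem asserts) and offers an alternative variational route to the same identity should the stationarity bookkeeping for the degenerate edges prove delicate.
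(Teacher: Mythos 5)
Your proposal is correct, and its skeleton matches the paper's: both proofs extract from $\partial\Theta/\partial\kappa_{kl}=0$ the identification $Q_{kl}=\mu_k-\mu_l$ on support edges with $p_k\neq p_l$, then recognize the resulting system $\sum_l(\mu_k-\mu_l)=q_k$, $\mu_k=0$ on $\Vd$, as the unit-conductance Poisson problem, so that $\mu$ plays the role of the uniform network's pressures and the flows coincide. Where you genuinely diverge is on the degenerate edges with $p_k=p_l$, which is the delicate step. The paper proves $\mu_k=\mu_l$ there by a combinatorial maximum-principle argument: assuming $\mu_{k_1}>\mu_{k_2}$ on an equal-pressure edge, it traces strictly monotone chains of $\mu$-values along equal-pressure edges in both directions until each hits a Dirichlet vertex where $\mu=0$, yielding the contradiction $0>\cdots>0$. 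You instead restrict the $p$-stationarity identity to the equal-pressure subgraph $H$, obtaining a source-free weighted Laplace equation at the Neumann vertices of $H$, and kill it with summation by parts: the Dirichlet energy $\sum_{E(H)}\kappa_{kl}(\mu_k-\mu_l)^2$ vanishes because every vertex term is zero either by harmonicity (Neumann) or by $\mu_k=0$ (Dirichlet). Your energy argument is shorter, avoids the bookkeeping of non-repeating chains, and treats components of $H$ that do and do not meet $\Vd$ uniformly, whereas the paper's chain argument is more elementary in that it uses no quadratic form, only ordering of the $\mu_k$. Your closing Thomson-principle observation (the unit-conductance flow uniquely minimizes $\tfrac12\sum Q_{kl}^2$ over conservative flows on the support) is a genuine bonus the paper does not mention, and would indeed give an independent variational proof. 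One small omission: the paper also verifies that the uniform-conductance configuration is itself a critical point of (\ref{functional:unifsimp}); this is not needed for the theorem as stated, but it supports the surrounding claim that stationary networks exist, so you may wish to note it.
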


\begin{proof}

The assumption that all pressure vertices have pressure zero is really an assumption that all pressure vertices have the same pressure: In the latter case since a constant shift in all pressures does not change the flows. To find the critical points of $\Theta$ we calculate the derivatives:

\begin{equation}
\frac{\partial \Theta}{\partial p_k} = \sum_{l,\langle k,l\rangle=1} \kappa^2_{kl}(p_k-p_l) - \sum_{l,\langle k,l\rangle=1} (\mu_k - \mu_l)\kappa_{kl},\qquad k\notin \Vd
\label{eq:unifsimp_partialp}
\end{equation}

\begin{equation}
\frac{\partial\Theta}{\partial \kappa_{kl}} = \kappa_{kl}(p_k-p_l)^2 - (\mu_k - \mu_l)(p_k-p_l)
\label{eq:unifsimp_partialkappa}
\end{equation}

\noindent along with $\mu_i = 0 \; \forall i\in \Vd$ by assumption. Now we show that a uniform distribution of conductances would result in a critical point $(\{p_k\},\{\mu_k\},\{\kappa_{kl}\})$, by rewriting the equation $\frac{\partial \Theta}{\partial p_k} = 0$ (\ref{eq:unifsimp_partialp}) into the matrix form:

\begin{equation}
D\mu = D\ub{2}p.
\label{eq:unifsimp_unifkappa_matrixeq}
\end{equation}

\noindent Here $D_{kl}$ is in Equation (\ref{eq:graph_Laplacian}) and $-D\ub{2}$ is another graph Laplacian:

\begin{equation}
D_{kl}\ub{2} \doteq \left\{\begin{array}{llll}
\sum_{l,\langle k,l\rangle =1} \kappa_{kl}^2 & k=l, k\notin \Vd\\
-\kappa_{kl}^2 & \langle k,l\rangle =1, k\notin \Vd\\
\kappa\ub{2} & k=l, k\in \Vd\\
0 & otherwise
\end{array}\right.
\label{eq:unifsimp_kappasqmatrix}
\end{equation}
in which the matrix is made full-rank if $\kappa\ub{2}>0$ (similarly to the $\kappa\ub{1}$ constant in $D$). The $\kappa\ub{1}$ entries in $D_{kl}$ enforce $\mu_k =0$ at each $k\in \Vd$. The entries in $D\ub{2}$ are not needed since $p_k=0$ at each $k\in \Vd$, but we add values here to emphasize the symmetry between $\{\mu_k\}$ and $\{p_k\}$. Now consider uniform conductances, i.e. $\kappa_{kl} = a>0 \; \forall \langle k,l\rangle =1$. We can set $\kappa\ub{1} = a$ and $\kappa\ub{2} = a^2$. Then we have $D = aD\ub{2}$ and since $D$ is invertible (see \ref{App:mu_solve})

\begin{equation}
\mu = D^{-1} D\ub{2} p = ap.
\end{equation}

\noindent Now this set of $\mu_k$'s and $p_k$'s then also satisfies $\frac{\partial\Theta}{\partial\kappa_{kl}} = 0$ because

\begin{equation}
\frac{\partial \Theta}{\partial \kappa_{kl}} = a(p_k-p_l)^2 - a(p_k-p_l)^2  = 0.
\end{equation}

\noindent Thus the network with uniform conductances along with pressures solved from the Kirchhoff's first law is indeed a critical point. 

Now we show that any interior critical point, i.e. satisfying $\kappa_{kl}>0 \; \forall \langle k,l\rangle =1$, has the same flows as the uniform conductance network. We will see that for any such network the $\mu_k$'s represent the pressures of the uniform conductance network. Since all the conductances are positive we have $\frac{\partial\Theta}{\partial\kappa_{kl}} = 0 \; \forall \langle k,l\rangle=1$. Assume for now $p_k - p_l \neq 0 \; \forall \langle k,l\rangle =1$. Then from Equation (\ref{eq:unifsimp_partialkappa}) we obtain that the $\{\mu_k\}$ obey a system of equations

\begin{equation}
\kappa_{kl}(p_k-p_l) - (\mu_k-\mu_l) = 0,\qquad \forall \langle k,l\rangle =1
\label{eq:unifsimp_minimizer_condition}
\end{equation}

\noindent which may be rewritten as

\begin{equation}
\mu_k - \mu_l = \kappa_{kl}(p_k - p_l) = Q_{kl},\qquad \forall \langle k,l\rangle = 1.
\end{equation}
Kirchhoff's first law in terms of $\mu_k$'s then reads

\begin{equation}
\sum_{l,\langle k,l\rangle =1} (\mu_k - \mu_l) = q_k \qquad \forall k\in \Vn,\qquad \mu_k = 0 \qquad \forall k \in \Vd.
\end{equation}

\noindent In matrix form the equations can be written as

\begin{equation}
D\mu = F
\label{eq:unifsimp_matrix_eq}
\end{equation}

\noindent where $F_k = q_k$ if $k\in \Vn$ and is zero otherwise, and $D$ is defined as for network made up of unit conductances:

\begin{equation}
D_{kl} \doteq \left\{\begin{array}{llll}
\sum_{l,\langle k,l\rangle =1} 1 & k=l, k\notin \Vd\\
-1 & \langle k,l\rangle =1, k\notin \Vd\\
1 & k=l, k\in \Vd\\
0 & o.w.
\end{array}\right.
\label{eq:unifsimp_matrix}
\end{equation}
Because $D$ is invertible we can solve for $\mu_k$'s from Eqn.~(\ref{eq:unifsimp_matrix_eq},\ref{eq:unifsimp_matrix}). The $\{\mu_k\}$'s represent the pressures that would occur at each vertex if all conductances in the network were set equal to 1, creating uniform conductance network. Since the flows $Q_{kl} = \mu_k - \mu_l$ are determined by $\mu_k$'s we conclude that the locally optimal networks would have flows the same as in the network of uniform conductances. 

To derive (\ref{eq:unifsimp_minimizer_condition}) from (\ref{eq:unifsimp_partialkappa}) we had to assume that $p_k\neq \_l$ whenever $\langle k,l\rangle=1$. Consider the case where in the optimal network $p_k - p_l = 0$ for some $\langle k,l \rangle =1$. For these $(k,l)$'s Eqn.~(\ref{eq:unifsimp_minimizer_condition}) no longer holds and we have to set $\frac{\partial\Theta}{\partial p_k} = 0$ in Eqn.~(\ref{eq:unifsimp_partialp}) to obtain extra information. We claim that $\mu_k = \mu_l$ if $p_k - p_l = 0$. This can be seen from a loop current argument similar to that used in \ref{App:mu_solve} to prove existence and uniqueness of the $\{\mu_k\}$. Specifically, suppose for contradiction that $\mu_{k_1}\neq \mu_{k_2}$ for some pair of vertices with $p_{k_1}-p_{k_2} = 0$ and without loss of generosity let $\mu_{k_1} > \mu_{k_2}$. If $k_1$ and $k_2 \in \Vd$ then $\mu_{k_1} = \mu_{k_2} = 0$; so at least one of the two vertices does not lie in $\Vd$. If $k_2\notin \Vd$ then $\frac{\partial\Theta}{\partial p_{k_2}} = 0$ implies:

\begin{equation}
\sum_{l,\langle k_2,l\rangle =1} \kappa^2_{k_2 l}(p_{k_2}-p_l) = \sum_{l,\langle k_2,l\rangle =1} \kappa_{k_2 l}(\mu_{k_2}-\mu_l).
\end{equation}

\noindent Since Eqn.~(\ref{eq:unifsimp_minimizer_condition}) holds when $p_k-p_l\neq 0$ we have

\begin{equation}
0 = \sum_{l,\langle k_2,l\rangle =1, p_l = p_{k_2}} \kappa^2_{k_2 l}(p_{k_2} -p_l) = \sum_{l,\langle k_2,l\rangle =1, p_{k_2}= p_l} \kappa_{k_2 l}(\mu_{k_2}-\mu_l).
\end{equation}

\noindent Since $\kappa_{kl}>0 \; \forall \langle k,l\rangle =1$ and the sum includes the negative summand $\kappa_{k_2 k_1}(\mu_{k_2}-\mu_{k_1})$ we can find $l$ for which $\mu_l < \mu_{k_2}$ and $p_l = p_{k_2}$. We let $k_3 = l$ and repeat the process to find a neighbor of $k_3$ such that $p_l = p_{k_3}$ but $\mu_l < \mu_{k_3}$. We then can keep repeating this process until we reach a vertex $k_N \in \Vd$ (no vertex may be visited more than once). We have imposed $\mu_{k_N}=0$. Now we trace through increasing $\mu_k$'s starting from $k_2$ and $k_1$ and we get $k'_1,...,k'_{N'}$ such that $\mu_{k'_n} < \mu_{k'_{n+1}} \; \forall n=1,...,N'-1$ and $\mu_{k'_1}>\mu_{k_1}$. By the same reasoning we have $k_{N'} \in \Vd$ and we reach a contradiction since $0 = \mu_{k'_{N'}} > \mu_{k'_{N'-1}} >\cdots > \mu_{k'_1} > \mu_{k_1}>\cdots>\mu_{k_N} = 0$. Therefore $\mu_k = \mu_l$ when $p_k = p_l$ and Eqn.~(\ref{eq:unifsimp_minimizer_condition}) actually holds for all $\langle k,l\rangle = 1$. Again we conclude that the flows of a locally optimal network with non-zero conductances are the same as the flows in the uniform conductance network. 

Finally we discuss the boundary case where $\kappa_{kl} = 0$ for some $\langle k,l\rangle = 1$, and we denote this set of links by $I$. To avoid ill-posedness of pressures we require that that the matrix $D$ is invertible. In this case we do not have Eqn.~(\ref{eq:unifsimp_minimizer_condition}) for $\kappa_{kl} = 0$ because $\frac{\partial\Theta}{\partial\kappa_{kl}}$ need not be zero on these edges. However since there is no flow through links with $\kappa_{kl} = 0$ we can write down Kirchhoff's first law as

\begin{equation}
D\mu = 0,
\label{eq:unifsimp_boundarynetwork_matrixform}
\end{equation}
where $-D$ is again the graph Laplacian, but with zero conductance edges removed and other edges with conductance $1$:

\begin{equation}
D_{kl} = \left\{\begin{array}{llll}
\sum_{l,\langle k,l\rangle =1, (k,l)\notin I} 1 & k=l,k\notin \Vd\\
-1 & \langle k,l\rangle =1, (k,l)\notin I\\
1 & k=l, k\in \Vd\\
0 & \textrm{otherwise}\\
\end{array}\right..
\end{equation}

\noindent We can safely remove the zero conductance links from the network because the difference $\mu_k - \mu_l$ no longer represents the flow $Q_{kl}$, and that we know $Q_{kl} = 0$ for these links. By assumption we can solve for $\mu$ from Eqn.~(\ref{eq:unifsimp_boundarynetwork_matrixform}) so $\{\mu_k\}$ represent the pressures within the uniform conductance network, but with links $\kappa_{kl} = 0$ removed from the network.
\end{proof}

\begin{figure}[h]

	\begin{center}
		\includegraphics[width=15 cm]{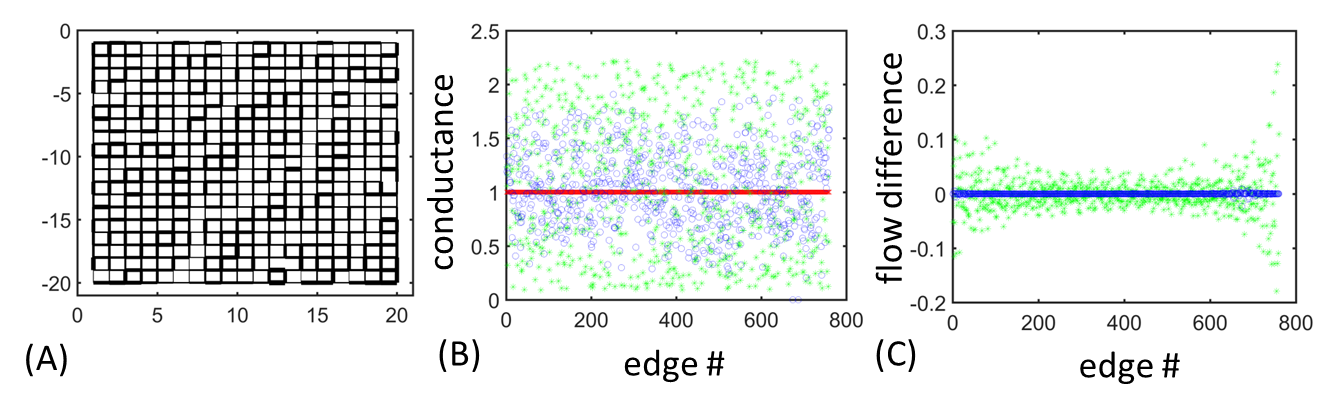}
\caption{Optimal network of functional $\frac{1}{2}\sum Q^2$ on a 20$\times$20 square grid with $400$ nodes. (A) An optimal network has a seemingly random distribution of conductances. The link widths are proportional to the conductances. (B) A closer view reveals that the conductances of the optimal network (blue circle) are quite different from uniform (red cross), and do not seem qualitatively different from initial conductances drawn from a uniform random distribution (green star). The conductances are normalized such that $\sum \kappa^\frac{1}{2}$ are the same. (C) The differences of flows from those in a uniform conductance network (blue circles) are uniformly zero, while the differences from the random initial conductance distribution (green stars) are not.}

\label{fig:Qsq_square_grid}
\end{center}

\end{figure}

Finally we numerically calculate the optimal network for uniformizing flow to verify the theoretical prediction. At each step we can solve for $\mu_k$ from Equation (\ref{eq:unifsimp_partialkappa}) and we can calculate the gradient from Eqn.~(\ref{eq:unifsimp_partialkappa}). Note that here we have neither Murray nor material constraint, so a numerical projection is not required. The numerical optimal networks have highly heterogeneous conductances within each optimal network (Fig.~\ref{fig:Qsq_square_grid}A, B), but, as the theory predicted, the flow distribution agrees with the network with uniform conductance (Fig.~\ref{fig:Qsq_square_grid}C). 

\subsection{Optimal network for uniformizing flows with Murray constraint\label{sec:Murray_constaint}}

So far we have followed previous work\cite{bohn2007structure,katifori2010damage} by calculating all of our optimal networks under constraints on the total material. However both material investment and transport costs (i.e. dissipation) may contribute to the total cost of a particular network. We modify our cost function, $g$, to include both costs. In this case $g(\{p_k\},\{\kappa_{kl}\}) = \sum (a\kappa_{kl}(p_k-p_l)^2 + \kappa^\gamma_{kl})-K$ depends on both pressure and conductance, and the full mechanism for keeping $g$ constant during the gradient descent needs to be used. To calculate the optimal network by this method we need an explicit formula for $\lambda$. The details are somewhat involved, and we place them in \ref{App:Murray_lam}. 

Are optimal networks under Murray's constraint morphologically different from those only under material constraint? It is difficult to answer this question for general target functions because it requires us to understand how the constraint surface intersects with the target functions. However for target functions that only depend on flows such as the flow uniformity target function the scaling on conductances can give us additional information. Suppose we find an optimal network under the material constraint. We calculate the total material cost $K$ of this network. Then calculate the optimal network in which Murray's constraint is imposed with allowed total energy $K$ including both material costs and dissipation. Denote by $\kappa_{kl}$ the conductances in the network under Murray constraint, and by $\kappa'_{kl}$ the conductances in the optimal network under material constraint. If $a$ is sufficiently close to zero then the target function of Murray network will be lower or equal to that of material network. The reasoning is that although $\sum \kappa'^\gamma_{kl} + a\frac{Q_{kl}^2}{\kappa'_{kl}} = K$ does not hold, we can try to solve for a multiplicative scaling $\beta>0$ that satisfies $\sum (\beta \kappa'_{kl})^\gamma + a\frac{Q_{kl}^2}{\beta \kappa'_{kl}} = K$. Notice that $Q_{kl}$ does not change under the scaling for this class of networks, so the value of target function is unaffected by scaling conductances. Now if $a>0$ is small enough we expect to be able to find a solution $\beta$ and $\{\beta \kappa'_{kl}\}$ is an admissible network in the sense that it obeys the Murray constraint. Thus the optimal network obeying the Murray constraint must have equal or smaller target function value than the optimal network obeying only the material constraint. By reversing this argument we can see that the optimal networks for small enough $a>0$ actually agree with those with $a=0$. The question is how large $a$ has to be so that the Murray network is truly constrained by the total energy cost so that optimal networks under the Murray constraint and under the material constraint diverge. To approach the question we numerically obtained the optimal networks for uniform flow on the topology of capillary bed (Fig.~\ref{fig:Qsq_square_grid}A) with $0\leq a \leq 50$ and fixed total energy cost. The Murray networks look qualitatively similar to network with only material constraints (Fig.~\ref{fig:symnet_murray}A), and have the same values of target function the same as analytical lower bound (for a uniform conductance network) (Fig.~\ref{fig:symnet_murray}B). This result suggests that there could be a wide range of $a$ for which the Murray constraint and material constraints result in identical optimal networks. However the Murray constraint does have an effect on the relative strength of dissipation and material cost. We observe that increasing $a$ decreases material costs (Fig.~\ref{fig:symnet_murray}C). The trend is unintuitive since $a$ represents the relative costs of dissipation and material. We might therefore expect at larger values of $a$, the network would invest more in material to reduce dissipation. However if we study the curve of $\sum (\beta \kappa'_{kl})^\gamma + a\frac{Q_{kl}^2}{\beta \kappa'_{kl}}$ drawn as a function of $\beta$, the function is U-shaped and diverges if $\beta \to 0$ or if $\beta\to \infty$. When $a$ increases the total energy increases, and the network has to adjust itself to a low energy state. If the network is on the left side of the U this means increasing $\beta$, which increases the material cost to realize the constraint. In contrast when the network is on the right side of the curve, decreasing $\beta$ will be the only way to lower the total energy, which explains the trends depicted in Fig.~\ref{fig:symnet_murray}B. We will further dissect the role of $a$ in Section \ref{sec:zebrafish}.

\begin{figure}[ht!]

	\begin{center}
		\includegraphics[width=15 cm]{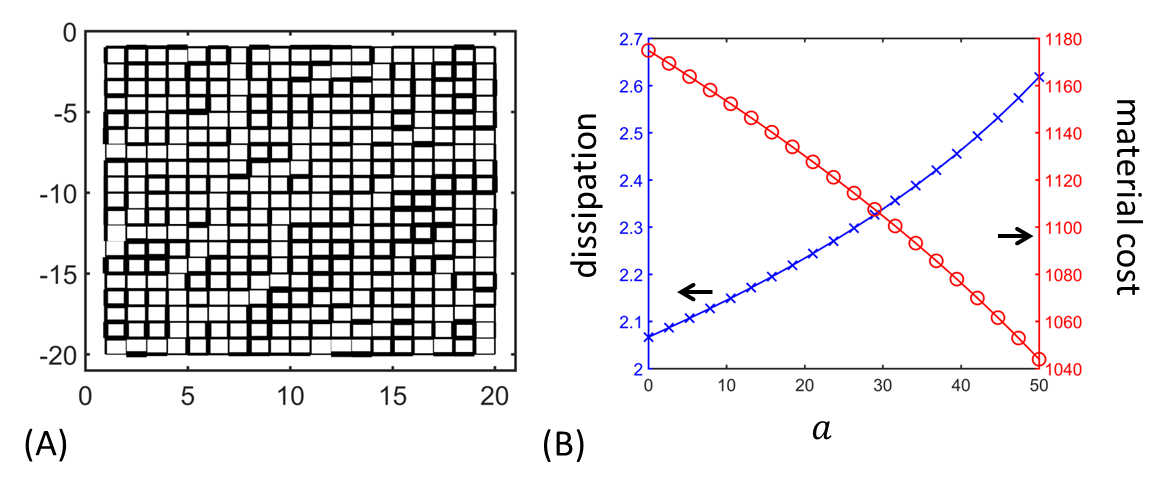}
		\caption{Optimal networks of functional $\frac{1}{2}\sum Q^2$ on a $20\times 20$ square grid under Murray constraint have the same flows as the analytic solution in Sec \ref{sec:Qsqanal}, but exhibit tradeoff between dissipation and material cost as $a$ increases. (A) For small $a$ optimal network with Murray constraint is equivalent to a network with material constraint. The network is constrained with $a = 36.8$, and the solution is selected from the best network visited during the gradient descent, with relative error in energy cost $<10^{-4}$, as in the following simulations. Widths show the relative conductances. (B) When $a$ is increased the dissipation in the network increases (blue crosses), while the material cost decreases (red circles). The simulations were carried out in the manner of numerical continuation, i.e. the simulation for each $a$ starts with the solution from previous $a$, and the simulation for $a=0$ starts with a random conductance configuration. All the networks have the same fixed total energy cost $K = 1174.9$.}
		\label{fig:symnet_murray}
\end{center}
\end{figure}

\section{Optimal networks on zebrafish embryo trunk vasculature \label{sec:zebrafish}}

Zebrafish are model organisms for studying vertebrate biology. In their embryonic state they are transparent, allowing the microvessels to be seen under the zebrafish's skin. Accordingly the embryonic zebrafish cardiovascular network is widely used to study vascular network growth and the effects of damage on the network\cite{lawson2002vivo,lieschke2007animal,chico2008modeling,walcott2014zebrafish,isogai2001vascular}. Blood flows into the trunk of the zebrafish through the dorsal aorta and then passes into minute vessels called intersegmental (Se) vessels. Blood then returns to the heart via the cardinal vein. These vessels are arranged just like rungs (Se) and parallels (cardinal vein and dorsal aorta) of a ladder (Fig.~\ref{fig:zebrafish_diss_topo}A). Most gas exchange in the network is assumed to occur in the Se vessels. As the zebrafish develops further minute vessels form between the Se vessels, converting the trunk into a dense reticulated network\cite{isogai2001vascular}. We focus on the mechanisms underlying flow distribution in the main fine vessels (Fig.~\ref{fig:zebrafish_diss_topo}A). Our previous study of the zebrafish microvasculature\cite{chang2015optimal} showed that if each vessel has the same radius then most red blood cells would return to the heart via the highest conductance path, i.e. along the closest Se vessel to the heart, which effectively acts as a short circuit for the network. Our analysis also revealed tradeoffs between preventing short circuits and increasing the dissipation within the network; that is, more flow would pass through distant Se vessels if the conductance of distant Se vessels is increased. But this distribution of conductances has higher dissipation than a network in which all Se vessels have the same conductance. Moreover, although the observed distribution of conductances does not create exactly uniform flows across all Se vessels, creating more uniform distributions of flow would further increase the dissipation within the network. The optimization method described in this paper arose as a way to create a mathematically formal version of the problem: with a given total energy available, how uniformly can flows be divided between intersegmental vessels, and how close is the real embryonic zebrafish network to this constrained optimum?

\begin{figure}[H]

	\begin{center}
		\includegraphics[width=12 cm]{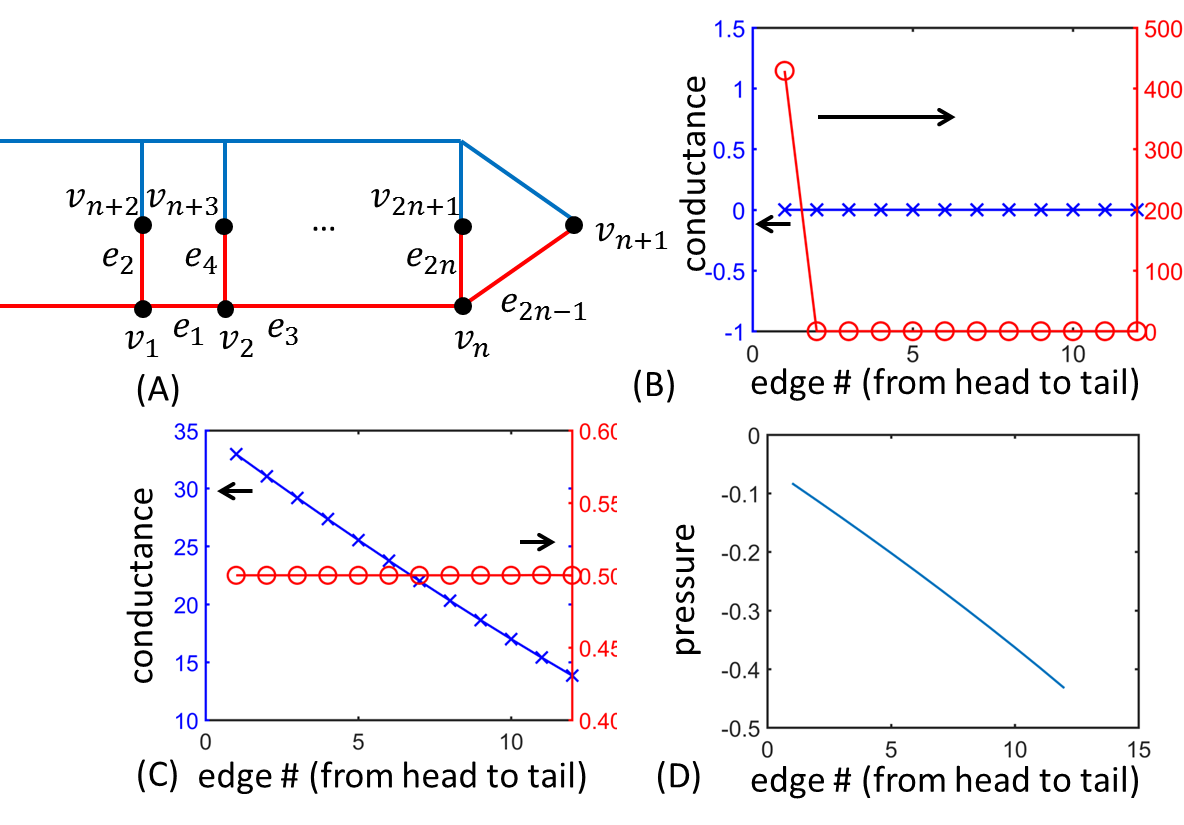}
\caption{Minimal dissipative networks for zebrafish trunk vasculature. (A) The zebrafish trunk vasculature can be simplified into a ladder network with aorta part (red) and the vein part (blue). The edges $e_1,e_3,...,e_{2n-1}$ are aorta segments and $e_2,e_4,...,e_{2n}$ are capillaries. We use $n=12$ in all the following calculations on zebrafish network. (B) The optimal dissipative network with $\gamma = \frac{1}{2}$ and fixed inflow does not correctly describe the zebrafish trunk network since all the conductances are concentrated on the first capillary (red circle), and the whole aorta is deleted (blue cross). In this calculation we imposed a fixed inflow on $v_1$ and fixed zero pressure on $v_{n+1},...,v_{2n+1}$. We started with $\kappa = 20$ for aorta segments and $\kappa=1$ for capillaries to reflect the difference in radii in real zebrafish. This initial condition is used for all the following simulations. (C) The optimal dissipative network with $\gamma=\frac{1}{2}$ and fixed outflows has a tapering aorta (blue cross) and capillaries with the same conductances (red circle). We imposed zero pressure on $v_1$ and fixed outflows on $v_{n+1},...,v_{2n+1}$ with $v_{n+1}$ taking half of the total outflow (i.e. $\frac{1}{2}F$) and $v_{n+2},...,v_{2n+1}$ evenly dividing the other half of $F$. (D) However the pressures on the ends of capillaries are decreasing to maintain uniform flows among capillaries, which is not physical due to the aorta-vein symmetry.}

\label{fig:zebrafish_diss_topo}
\end{center}

\end{figure}

Since the zebrafish trunk network is symmetric we can just consider half of the network consisting of the aorta and intersegmental arteries, designated by vertices $v_1,...,v_{2n+1}$ and edges $e_1,...,e_{2n}$ with $n$ being the number of Se vessels (Fig.~\ref{fig:zebrafish_diss_topo}A). Due to the symmetry of the zebrafish trunk vasculature we fix the pressures at $v_{n+1},...,v_{2n+1}$. We assume the heart pumps a constant volume of blood into the trunk in every time interval so we apply a fixed inflow, $F$, boundary condition on $v_1$. First we show how far the network is from minimizing dissipation. If we assign a cost function based only on the total material in the network (i.e. set $a=0$ and $\gamma=\frac{1}{2}$ in Eqn.~(\ref{eq:Thetadefinition})), then minimizing dissipation eliminates all but the first Se vessel (Fig.~\ref{fig:zebrafish_diss_topo}B). Conversely if we instead impose uniform flow at each of the vertices  $v_{n+2},...,v_{2n+1}$ and seek a distribution of conductances that minimizes dissipation, although we see a more realistic distribution of conductances (identical conductances in each Se vessel and tapering aorta (Fig.~\ref{fig:zebrafish_diss_topo}C)), in this optimal network the pressures where the Se vessels meet the cardinal vein decrease with distance from the heart (Fig.~\ref{fig:zebrafish_diss_topo}D), so that blood flows away from the heart within the cardinal vein which is unphysical.

We then explore an alternate organizing principle. Specifically we make uniform flow within Se vessels as our target function. Consider the functional
\begin{equation}
f(\{p_k\},\{\kappa_{kl}\}) = \sum_{i=1}^n \frac{1}{2}(Q_{2i} -\bar{Q})^2,
\end{equation}
where $\bar{Q}$ is a predetermined flow for all the capillaries (in the following arguments edge-defined quantities such as $Q_i$ are indexed with the edges, and vertex-defined quantities such as $p_i$ are indexed with the vertices). Using this indexing scheme, the function to be optimized becomes:
\begin{align}
\Theta = & \sum_{i=1}^n \frac{1}{2}\kappa^2_{2i} p^2_i - \sum_{i=1}^n \bar{Q}\kappa_{2i}p_i - \sum_{i=2}^{n-1} \mu_i[\kappa_{2i-3}(p_i-p_{i-1}) + \kappa_{2i-1}(p_i-p_{i+1}) + p_i\kappa_{2i}] \nonumber \\
 & - \mu_1[\kappa_1 (p_1-p_2) + p_1\kappa_2 - F] - \mu_n[\kappa_{2n-3}(p_n-p_{n-1}) + p_n \kappa_{2n-1} + \kappa_{2n}p_n].\label{functional:zebrafish_QmQbarsq}
\end{align}
Just as in Section \ref{sec:Qsqanal} we do not need to introduce a Lagrange multiplier enforcing the material constraint because the target function only depends on flows, and we can scale all conductances to realize any material constraint without affecting the target function. We put the details of the calculation in \ref{App:zebrafish_grad}. Instead of concentrating all the materials on the first capillary or tapering the aorta, the uniform flow network has constant conductance along the aorta and conductances on the Se vessels that increase exponentially with distance from the heart (Fig.~\ref{fig:zebrafish_Qbar_kappa}A). Previously\cite{chang2015optimal} we showed that if each Se vessel is assigned the same conductance, then blood flows will decrease exponentially with the index of the Se vessel. To counter this effect and to achieve uniform flow the conductance of Se vessels has to increase from head to tail. Indeed the optimal distribution of conductances matches closely to the experimental data we measured\cite{chang2015optimal} (Fig.~\ref{fig:zebrafish_Qbar_kappa}B), further suggesting that uniformity might be prioritized over dissipation within zebrafish cardiovascular network.

\begin{figure}[h!]

	\begin{center}
		\includegraphics[width=12 cm]{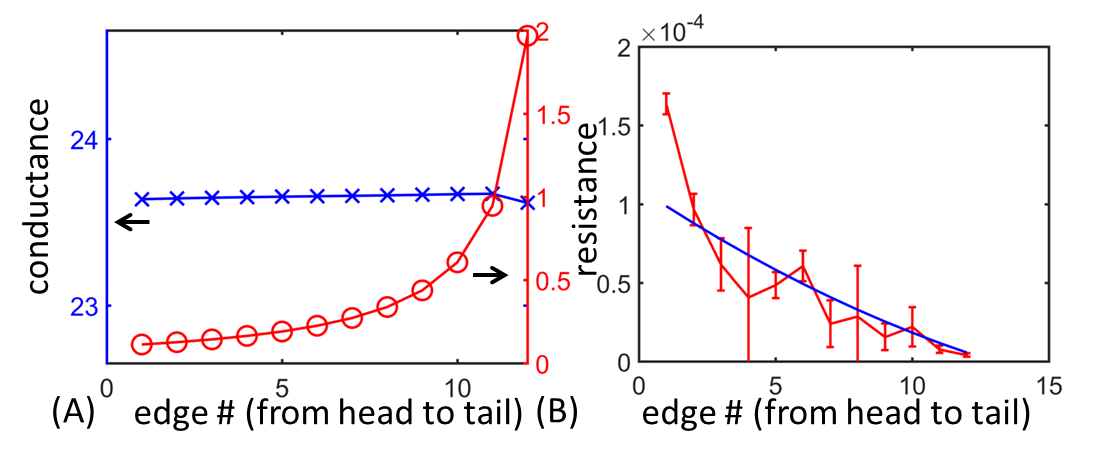}
\caption{The optimal distribution of material for achiving uniform flows (\ref{functional:zebrafish_QmQbarsq}). (A) The optimal network dictates a constant conductance on aorta segments (blue cross) but assigns conductances to Se vessels that increase exponentially from head to tell (red circle). We scale the conductances such that $\sum \kappa^\frac{1}{2}$ remains the same for comparison with minimal dissipative networks. (B) The predicted hydraulic resistance (blue dashed curve) agrees well with experimentally measured data (red curve, with 95\% confidence intervals). The data is obtained from our previous work\cite{chang2015optimal} under the assumption that the volume fraction of the red blood cells is $0.45$. The theoretical prediction is normalized by the mean of the data.}

\label{fig:zebrafish_Qbar_kappa}
\end{center}

\end{figure}

The real zebrafish network agrees well with the optimal set of conductances predicted for a network that uniformizes fluxes across Se vessels. But the agreement is not exact. Is the difference between the two optimal and real networks evidence that the real network has other constraints or target functions that are not modeled by Equation (\ref{functional:zebrafish_QmQbarsq})? When given two potential target functions or constraints that may explain the measured geometry of a real transport network, our optimization method provides tools to measure the relative weight the network gives to the two principles. For the zebrafish network, we perform network optimization using the Murray constraint, varying the parameter $a$ to see the extent to which material or transport costs influence the network organization. The gradient descent method with Murray constraint follows \ref{App:Murray_lam} with the target function (and therefore the formula for $\mu$) modified. Specifically $\chi$ now becomes
\begin{equation}
 \chi_{kl} = (\kappa_{kl}(p_k-p_l)-\bar{Q})(p_k-p_l)I_{kl} - \nabla (D^{-1}\zeta)_{kl} \nabla p_{kl} \qquad (k,l)\in \mathcal{E}
\label{eq:zebrafish_chi}
\end{equation}
where $I_{kl} = 1$ if and only if the edge $kl$ is an intersegmental vessel ($\mathcal{E} \doteq \{(k,l): \langle k,l\rangle =1, k<l\}$) and
\begin{equation}
\zeta_k \doteq \left\{\begin{array}{ll}
\sum_{l,\langle k,l\rangle = 1} (\kappa_{kl}(p_k-p_l)-\bar{Q})I_{kl}\kappa_{kl} & k\notin \Vd\\
0 & k\in \Vd\\
\end{array}\right.
\label{eq:zebrafish_zeta}
\end{equation}
Finally once $\lambda$ has been solved for, the expression of $\mu_k$ is calculated from
\begin{equation}
\mu = 2a\lambda p + D^{-1}\zeta
\end{equation}
(For complete derivation see \ref{App:Murray_lam_zebrafish}). Based on our analysis (in Section \ref{sec:Murray_constaint}) of uniform partitioning of flows in networks with the Murray constraint, we expect that the optimal zebrafish network will be essentially independent of $a$ over some finite interval of $a$ values, starting at $0$. Indeed we find that for small $a$ the target function remains vanishing and the dissipation increases as $a$ increases up to a critical value. However, the arguments given in Section \ref{sec:Murray_constaint} are silent on how the network changes as $a$ is increased, in particular what happens once $a$ exceeds the critical value, once $a$ exceeds the threshold where it is no longer possible to rescale the conductances in a network that obeys a material constraint into a network that obeys the Murray constraint. We find that a critical value of $a_c = 33.3$ the network undergoes a phase transition where the target function switches from constant to monotonic increasing and the dissipation decreases (Fig.~\ref{fig:zebrafish_murray}A). At the phase transition the conductances of intersegmental vessels transition from solution shown in Fig.~\ref{fig:zebrafish_Qbar_kappa}B to becoming non-monotonic with the conductance increasing between vessels near the head and then decreasing at the tail (Fig.~\ref{fig:zebrafish_murray}B). Above the critical value of $a$, the optimal network no longer keep flows uniform between intersegmental vessels (Fig.~\ref{fig:zebrafish_murray}C). Put another way, as the parameter $a$ is changed, rather than smoothly interpolating between networks that optimize uniformity and networks that optimize transport, the network optimizes uniformity over a large interval of values of $a$, and then shifts suddenly to a network that is far from realizing a uniform distribution of fluxes.

\begin{figure}[h]

	\begin{center}
		\includegraphics[width=15 cm]{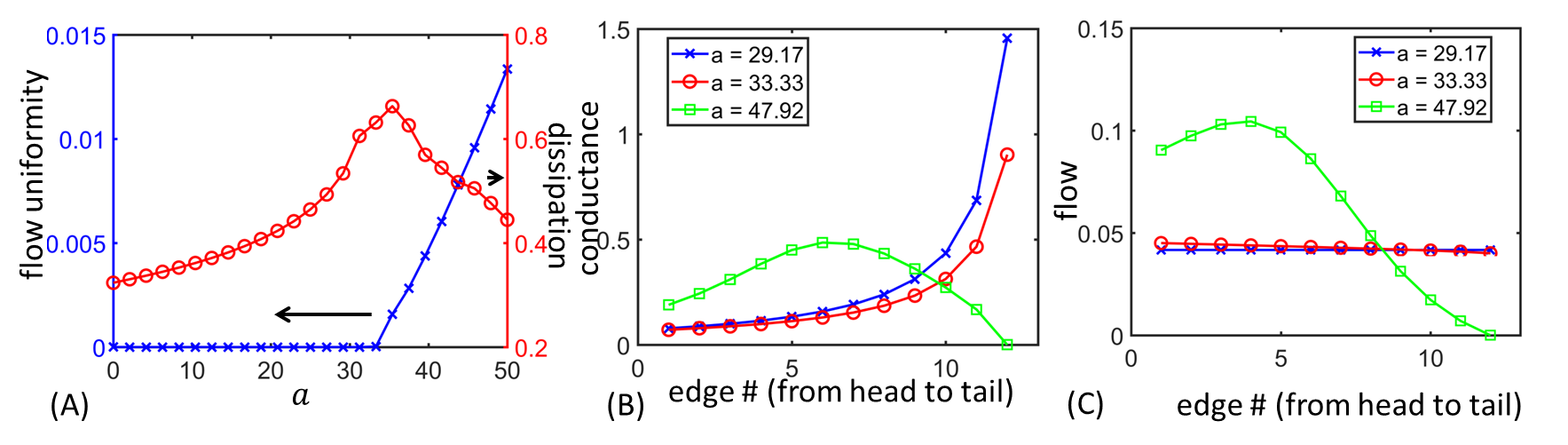}
\caption{Optimal networks of $\frac{1}{2}\sum_i (Q_{2i}-\bar{Q})^2$ exhibit an apparent phase transition under the Murray constraint with varying cost-of-dissipation. (A) The target function is vanishing for small $a$ until $a_c= 33.3$ where a phase transition occurs and the value of target function suddenly increases (blue cross). The dissipation (red circle) increases when $a<a_c$ due to a similar mechanism as in Fig.~\ref{fig:symnet_murray}B, but has a sharp decrease right after the critical value $a_c$. Here we adopted numerical continuation as in Fig.~\ref{fig:symnet_murray}B, but when a local minimum around previous initial condition does not satisfy Murray constraint the initial configuration at $a=0$ is reused as the initial conductances. The minimal value for the total energy cost upon scaling of conductances is used whenever the Murray constraint cannot be maintained. The Murray energy $K$ is maintained to be 70.43 in all simulations, justifying this projection method for stiff constraint. The total energy cost is fixed to that of initial configuration (with uniform conductances in capillary being $1$ and those in aorta being $20$) when $a=1$. The solution is selected from the best network visited during the gradient descent, with relative error in energy cost $<10^{-4}$ (B) The conductances of capillaries change qualitatively after the phase transition. The morphology resembles unconstrained network (Fig.~\ref{fig:zebrafish_Qbar_kappa}A) before the phase transition (blue cross and red circle), but changes qualitatively afterwards (green square). (C) The optimal value in target function is reached before the phase transition (blue cross and red circle) but flows decrease from head to tail afterwards (green square).}

\label{fig:zebrafish_murray}
\end{center}

\end{figure}

\section{Discussion and Conclusion}

In this work we proposed an algorithm that is able to find locally optimal networks for general target functions under general constraints. We tested that our algorithm is able to reproduce networks that agree with previously calculated optimal transport networks. Motivated by our previous work on zebrafish microvasculature\cite{chang2015optimal}, we then studied optimal networks that uniformize network flow and derived an analytical result confirmed by the numerical solutions. To study the tradeoffs between different target functions for a network we introduced a constraint that accounts for both the material cost and dissipation. Finally we applied our algorithm to the zebrafish trunk vasculature and showed that the numerical optimal network agrees with the experimental data. Moreover our results expose a phase transition that occurs as the relative size of transport and material costs is increased. Surprisingly, optimal networks do not continuously interpolate between optimizing uniformity and optimizing dissipation, but instead are initially invariant under changes in the cost of dissipation, and then undergo a sudden phase transition-like reconfiguration when this cost exceeds a certain threshold.

Although this result would need to be replicated for other combination of target functions, it offers a surprising biological insight; namely, the departure of real zebrafish networks from the optimum for creating uniform distributions of fluxes cannot be explained from the point of view of the network needing to balance tradeoffs between multiple target functions, and is therefore more likely due to another cause; for example variability (or noise) during vessel formation. More generally adherence to a single target function supports a continued focus on single target functions when studying biological networks, since no two functions will likely shape the network simultaneously.

Our algorithm treats the conductances of all edges as independent variables, so the number of degrees of freedom over which optimization is performed is the number of vessels. But the number of vessels in real biological networks may be so large as to defeat direct application of the algorithm. For example, in the mouse brain vascular network there are $\sim 10^4$ capillaries in a volume of $2 \; \textrm{mm}^3$ \cite{blinder2013cortical}. More degrees of freedom will also lead to a multiplication of local optima. While parallelization and coupling to global optimization methods for navigation rough landscapes (e.g. simulated annealing) could be potential solutions, another approach is to treat the brain as a multiscale network. Large vessels play different roles from small vessels (such as capillaries). This property may be exloited by numerical methods that treat different scales in different ways.

There are many other biological relevant functions to which our algorithm could be applied, for example damage resistance\cite{katifori2010damage} and mixing\cite{roper2013nuclear}. Moreover, our model of oxygen perfusion (which we assume to be uniform, so long as fluxes are uniform between fine vessels) is unlikely to be quantitatively correct for more complex networks. Specifically red blood cells will have lower oxygenation levels the more capillaries they travel through. The history of red blood cell passage through the network will therefore influence their oxygenation. 

Most optimization problems in this work are constrained either by material or total energy, and it is not clear whether imposing network cost limits as a penalty function rather than as a constraint will give the same result or not. In Murray's original paper the Murray's law was derived by minimizing the total energy formed as a sum of material and transport costs\cite{murray1926physiological,sherman1981connecting}. However recent works on minimal dissipation networks impose the material cost as a constraint and minimize dissipation under this constraint. The two approaches carry different physical meanings, and it is not clear which approach is a better model for real biological systems, or whether, indeed, they produce equivalent networks. We are currently studying the conditions under which the two problems are equivalent, i.e. produce equivalent classes of optimal networks\cite{chang2017minimal}.

In conclusion we proposed a gradient descent algorithm that finds optimal networks with general target functions and constraints. We create this algorithm to reveal the biological organizing principles of microvascular networks. The recent explosion in data streams for microvasculature geometry and flow\cite{chaigneau2003two,blinder2013cortical}, has created an unmet need for quantitative tools for testing hypotheses on the optimization principles underlying real transport networks. As our zebrafish study shows, our algorithm allows comparison between biological networks and optimal networks achieving different biological functions. While further work will be needed to resolve computational challenges and make rigorous mathematical formulation, our work provides a way to test hypothetical optimal trategies for microvasculature organization, with long term use when understanding microvascular damage, defects and recovery.

\section{Acknowledgments}

This research was funded by grants from the NSF (under grant DMS-1351860). MR. SSC was also supported by the National Institutes of Health, under a Ruth L. Kirschstein National Research Service Award (T32-GM008185). The contents of this paper are solely the responsibility of the authors and do not necessarily represent the official views of the NIH. MR also thanks Eleni Katifori and Karen Alim for useful discussions, and the American Institute of Mathematics for hosting him during one part of the development of this paper.

\appendix\section{Solvability of $\{\mu_k\}$\label{App:mu_solve}}

Here we prove that $\{\mu_k\}$ in Equation (\ref{eq:muequation}) are solvable under a general configuration of flow (i.e. Neumann) and pressure (i.e. Dirichlet) boundary conditions (BCs). We assume that $\kappa_{kl} >0 \; \forall \langle k,l\rangle = 1$ (since $\kappa_{kl} = 0$ is the same as $\langle k,l\rangle = 0$) and that the network is connected. It suffices to show that the matrix $D$ is invertible. However this is the same matrix in the linear system for solving $\{p_k\}$ with the specified BCs, so we only have to show that there exists a unique flow given any flow and pressure BCs, which is a well-known\cite{LP:book}. However since our derivation makes use of multiple invertibility results for different matrices $D, D\ub{2}$ and so on, we provide a proof in order to highlight under what conditions invertibility is allowed. The problem is equivalent to showing that

\begin{equation}
Dp = 0 \Rightarrow p =0.
\label{eq:no_flow_pressure}
\end{equation}

\noindent The solution $p$ for Eqn.~(\ref{eq:no_flow_pressure}) corresponds to a network where we do not have any flows into the system except possibly at nodes with pressure BCs, denoted by $\Vd$. The goal is to show that $p_k = 0 \; \forall k$. Suppose for contradiction that $\exists i\notin \Vd$ s.t. $p_i \neq 0$ (since we already have $p_j = 0 \; \forall j\in \Vd$). Then we would have $Q_{kl} \neq 0$ for some $\langle k,l\rangle =1$ since the network is connected, and WLOG let $Q_{kl}>0$. Now we can trace this flow throughout the network in the following procedure:

\begin{enumerate}
\item Given that $Q_{k_{n-1} k_n} >0$ first check if $k_n \in \Vd$, and stop if this is the case.
\item Consider all nodes $l$ s.t. $\langle k_n,l\rangle = 1$. According to Kirchhoff's first law there must be an $l$ s.t. $Q_{k_n l} >0$. Since the network is finite we can pick e.g. the smallest $l$ satisfying these conditions and let $k_{n+1} = l$.
\item Repeat the procedure until $k_N \in \Vd$ for some $N$ and stop. 
\end{enumerate}

If we start with $k_1 = k, k_2 = l$ we can initiate the process since the first condition is satisfied. This procedure has to stop eventually because the network is finite and that $k_1,...,k_n$ are all distinct for any given $n>1$. To see this suppose $k_n = k_m$ with $m>n$. Then we would have $p_n>p_{n+1}>\cdots > p_m = p_n$, a contradiction. Thus we would end up with a chain of distinct nodes $k_1,k_2,...,k_N$ with $\langle k_n,k_{n+1}\rangle =1, Q_{k_n k_{n+1}} > 0 \; \forall n=1,...,N-1$, and $N\in \Vd$. Now we repeat the same procedure just with $k'_1 = l, k'_2 = k$ to trace the flows upstream, and we would end up with another chain $k'_1,k'_2,...,k'_{N'}$ with $\langle k'_n,k'_{n+1}\rangle =1, Q_{k'_n k'_{n+1}} < 0 \; \forall n=1,...,N'-1$, and $N' \in \Vd$. Notice that there is no repetition in the set $\{k_1,...,k_N,k'_1,...,k'_{N'}\}$  since $k_n = k'_m$ would lead to the same contradiction since pressures must be ordered.

\section{Explicit formula for $\lambda$ for uniform flow networks with Murray constraint\label{App:Murray_lam}}

We introduce several notations to be used later. Suppose $\{b_{ij}\}$ is a set of quantities defined on the edges of the network. For any real constant $c$ we define the matrix for the graph Laplacian with specified boundary conditions as 

\begin{equation}
M_b\ub{c} = \left\{\begin{array}{llll}
\sum_{l,\langle k,l\rangle =1} b_{kl} & k=l, k\notin \Vd\\
-b_{kl} & \langle k,l\rangle =1\\
c & k=l, k\in \Vd\\
0 & otherwise
\end{array}\right..
\end{equation}
We also abbreviate $M_b = M_b\ub{1}$. In the notation of Equations (\ref{eq:unifsimp_kappasqmatrix}) $D = M_\kappa$ and $D\ub{2} = M_{\kappa^2}$. For a quantity $v$ that is defined on the vertices of the network (such as pressure) we define the graph difference vector $\nabla v \in \R^E$ as

\begin{equation}
\nabla v_{kl} = v_k-v_l \qquad (k,l)\in \E,
\end{equation}
where $\mathcal{E}$ denotes the set of ordered pairs of edges so that each edge only appear once in $\mathcal{E}$. Now we can derive the formula for $\lambda$: $\delta\kappa$ is given by the explicit formula. From $\frac{\partial\Theta}{\partial p_k} = 0$ we obtain $\mu = D^{-1}D\ub{2} p + 2\lambda a p$ (recall here we have $f = \sum_{k>l, \langle k,l\rangle =1} \frac{1}{2}\kappa^2_{kl}(p_k-p_l)^2, g = \sum_{k>l, \langle k,l\rangle =1} a\kappa_{kl}(p_k-p_l)^2 + \kappa^\gamma_{kl} - K^\gamma$), and so:

\begin{equation}
\frac{\partial\Theta}{\partial\kappa_{kl}} = \lambda[\gamma \kappa^{\gamma-1}_{kl} - a(\nabla p_{kl})^2] + \kappa_{kl} (\nabla p_{kl})^2  - \nabla(D^{-1}D\ub{2}p)_{kl}\nabla p_{kl}.
\end{equation}
 
\noindent We determine $\lambda$ from the variational:

\begin{align}
 0  =  dg &  = \sum_{k>l,\langle k,l\rangle =1} \gamma \kappa_{kl}^{\gamma-1} \delta \kappa_{kl} + a\delta\kappa_{kl} \nabla p_{kl}^2 + 2a \kappa_{kl} \nabla \delta p_{kl} \nabla p_{kl} \nonumber\\
 & =  \sum_{k>l,\langle k,l\rangle =1} -\alpha (\gamma\kappa_{kl}^{\gamma-1}+a\nabla p_{kl}^2)\Big\{\lambda[\gamma \kappa^{\gamma-1}_{kl} - a\nabla p_{kl}^2] \nonumber\\
 & + \kappa_{kl} \nabla p_{kl}^2  - \nabla(D^{-1}D\ub{2}p)_{kl}\nabla p_{kl}\Big\} + 2a \kappa_{kl}\nabla \delta p_{kl} \nabla p_{kl}.
\end{align}
This formula depends on $\delta p$; the change in $p$ produced by the change $\kappa\mapsto \kappa+\delta\kappa$. If we assume $p_i = 0 \; \forall i\in \Vd$ we can write Equation (\ref{eq:Kirchfirstvar}) in matrix form as

\begin{equation}
M_{\delta \kappa}p + D\delta p = 0
\label{eq:Kirchfirstvar_matrix}
\end{equation}

\noindent so

\begin{equation}
\delta p = -D^{-1} M_{\delta \kappa} p.
\end{equation}
(Equation (\ref{eq:Kirchfirstvar_matrix}) can be modified by adding a non-zero vector on the right hand side, if inhomogeneous pressure boundary conditions are applied.) Thus if we define auxiliary variables: $\beta\doteq \gamma \kappa^{\gamma-1} - a\nabla p^2, \chi \doteq \kappa \nabla p^2 - \nabla(D^{-1}D\ub{2} p)\nabla p$, so that $\delta \kappa = -\alpha(\lambda\beta+\chi)$, then:

\begin{align}
0 & = -\alpha \Big\{\lambda \sum_{k>l,\langle k,l\rangle =1}(\gamma\kappa_{kl}^{\gamma-1} + a\nabla p_{kl}^2)\beta_{kl} + \sum_{k>l,\langle k,l\rangle =1}(\gamma \kappa_{kl}^{\gamma-1} + a\nabla p_{kl}^2)\chi_{kl}\Big\} \nonumber \\
& - 2a \sum_{k>l,\langle k,l\rangle =1}\kappa_{kl}\nabla p_{kl} \nabla (D^{-1} M_{-\alpha\{\lambda\beta + \chi\}} p)_{kl}, \nonumber \\
0 & =\lambda \sum_{k>l,\langle k,l\rangle =1} \gamma^2 \kappa_{kl}^{2\gamma-2} - a^2\nabla p_{kl}^4 - 2a\kappa_{kl}\nabla p_{kl} \nabla(D^{-1}M\ub{0}_{\beta} p)_{kl} \nonumber \\
& +\sum_{k>l,\langle k,l\rangle =1} (\gamma\kappa_{kl}^{\gamma-1} + a\nabla p_{kl}^2)\chi_{kl} - 2a\kappa_{kl}\nabla p_{kl} \nabla (D^{-1}M\ub{-\frac{1}{\alpha}}_{\chi}p)_{kl}.
\end{align}

\noindent Finally we can write down the formula for $\lambda$ as

\begin{equation}
\lambda = \frac{-\sum_{k>l,\langle k,l\rangle =1} (\gamma\kappa_{kl}^{\gamma-1} + a\nabla p_{kl}^2)\chi_{kl} - 2a \kappa_{kl}\nabla p_{kl} \nabla (D^{-1}M\ub{-\frac{1}{\alpha}}_{\chi}p)_{kl}}{\sum_{k>l,\langle k,l\rangle =1} \gamma^2 \kappa_{kl}^{2\gamma-2} - a^2 \nabla p_{kl}^4 - 2a \kappa_{kl}\nabla p_{kl} \nabla(D^{-1}M\ub{0}_{\beta} p)_{kl}}.
\label{eq:Murray_lambda}
\end{equation}

The value of $\lambda$ in Eqn.~(\ref{eq:Murray_lambda}) ensures that $g$ remains constant up to $O(\delta \kappa_{kl})$ terms. However, we must also adjust $\{\kappa_{kl}\}$ at each step to exactly maintain the constraint following the method given in Section \ref{sec:algorithm}. In previous applications since $g$ was a function of $\kappa$ alone this additional projection step did not require perturbation of pressures. Now both the change in $\kappa_{kl}$ and the change in flow must be considered when adjusting conductances. We calculate here the additional terms created by involvement of pressures. To project along the constraint surface normal we need to calculate the normal vector:

\begin{align}
n_{kl}& = \frac{\partial}{\partial\kappa_{kl}}\Big\{\sum_{i>j,\langle i,j\rangle} \big(\kappa^\gamma_{ij} + a\kappa_{ij}(p_i-p_j)^2 \big) - K^\gamma\Big\} \nonumber \\
& = \gamma \kappa^{\gamma-1}_{kl} + a(p_k-p_l)^2 + \sum_{\langle i,j\rangle,i>j} 2a\kappa_{ij}(\frac{\partial p_i}{\partial\kappa_{kl}} - \frac{\partial p_j}{\partial\kappa_{kl}})(p_i-p_j).
\end{align}

\noindent To obtain $\frac{\partial p_i}{\partial\kappa_{kl}}$ we differentiate Kirchhoff's first law with respect to $\kappa_{kl}$:

\begin{equation}
\sum_j \kappa_{ij} (\frac{\partial p_i}{\partial\kappa_{kl}} - \frac{\partial p_j}{\partial\kappa_{kl}}) + (\delta_{ik}\delta_{jl}-\delta_{il}\delta_{jk})(p_i-p_j) = 0
\end{equation}

\noindent or:

\begin{equation}
\sum_j \kappa_{ij} (\frac{\partial p_i}{\partial\kappa_{kl}}-\frac{\partial p_j}{\partial\kappa_{kl}}) = -(p_k-p_l)(\delta_{il}+\delta_{ik}).
\end{equation}

\noindent Notice that $\frac{\partial p_i}{\partial\kappa_{kl}} = 0\; \forall i\in \Vd$ since these $p_i$ are fixed by the boundary conditions. Then we can solve for $\frac{\partial p_i}{\partial \kappa_{kl}}, 1\leq i \leq V$ by solving the linear system (solvability was discussed in \ref{App:mu_solve}) and calculate the normal vector.\\

\section{Gradient descent method for zebrafish trunk network uniformizing flows in intersegmental vessels \label{App:zebrafish_grad}}

For performing gradient descent method for zebrafish trunk network uniformizing flows in Se vessels we calculate the partial derivatives of $\Theta$:

\begin{equation}
\frac{\partial\Theta}{\partial p_i} = \left\{\begin{array}{lll}
\kappa^2_{2i} p_i - \bar{Q}\kappa_{2i} - (\kappa_{2i-1} + \kappa_{2i-3} + \kappa_{2i})\mu_i + \kappa_{2i-1}\mu_{i+1} + \kappa_{2i-3}\mu_{i-1} & i\neq 1,n\\
\kappa^2_2 p_1 - \bar{Q}\kappa_2 - (\kappa_1+\kappa_2)\mu_1 + \mu_2\kappa_1 & i=1\\
\kappa^2_{2n} p_n -\bar{Q}\kappa_{2n} - (\kappa_{2n-3} + \kappa_{2n-1} + \kappa_{2n}) \mu_n + \kappa_{2n-3}\mu_{n-1} & i=n
\end{array}\right..
\label{eq:zebrafish_partialp}
\end{equation}

\begin{equation}
\frac{\partial\Theta}{\partial\kappa_i} = \left\{\begin{array}{lll}
\kappa_i p^2_{i/2} - \bar{Q}p_{i/2} - \mu_{i/2}p_{i/2} & i|2 = 0\\
-(\mu_{\frac{i+1}{2}} - \mu_{\frac{i+3}{2}})(p_{\frac{i+1}{2}} - p_{\frac{i+3}{2}}) & i|2 =1, i\neq 2n-1\\
-\mu_n p_n & i=2n-1\\
\end{array}\right..
\label{eq:zebrafish_partialkappa}
\end{equation}

Then we impose the physical BCs, i.e. fixed inflow into the network and zero pressure on the ends of the main aorta and the capillaries, and perform gradient descent to find the optimal network.

\section{Explicit formula for $\lambda$ for uniform flow networks with Murray constraint on zebrafish trunk vascular network \label{App:Murray_lam_zebrafish}}

Here we carry out the calculation of $\{\mu_k\}, \chi$ for $\lambda$ calculation on zebrafish trunk vascular network topology, following \ref{App:Murray_lam}. The only difference lies in the target function:

\begin{equation}
f = \sum_{(k,l)\in \mathcal{E}} \frac{1}{2}(\kappa_{kl}(p_k-p_l)-\bar{Q})^2 I_{kl}
\end{equation}
where $\mathcal{E} = \{(k,l): \langle k,l\rangle =1, k<l\}$ under the zebrafish trunk topology and our index convention (Fig.~\ref{fig:zebrafish_diss_topo}A), and $I$ is defined as in Equation (\ref{eq:zebrafish_chi}). Again from $\frac{\partial\Theta}{\partial p_k} = 0$ we get

\begin{equation}
\mu = 2a\lambda p + D^{-1}\zeta
\label{eq:mu_zebrafish_app}
\end{equation}
where $\zeta$ is defined as in Equation (\ref{eq:zebrafish_zeta}). Then the gradient of $\Theta$ can be calculated as

\begin{equation}
\frac{\partial\Theta}{\partial\kappa_{kl}} = (\kappa_{kl}(p_k-p_l)-\bar{Q})(p_k-p_l)I_{kl} - a\lambda (\nabla p^2)_{kl} + \lambda \gamma \kappa^{\gamma-1}_{kl} - \nabla (D^{-1}\zeta)_{kl} \nabla p_{kl} \doteq \lambda\beta_{kl} + \chi_{kl} \qquad \forall (k,l)\in E
\end{equation}
where $\beta_{kl} = \gamma\kappa^{\gamma-1}_{kl}$ as in \ref{App:Murray_lam}, but $\chi_{kl} = (\kappa_{kl}(p_k-p_l)-\bar{Q})(p_k-p_l)I_{kl} - \nabla (D^{-1}\zeta)_{kl} \nabla p_{kl}$ is different. Notice that if we set $\bar{Q} = 0, I_{kl} = 1 \; \forall \langle k,l\rangle =1$ then $f$ is the same as in \ref{App:Murray_lam} and the expression of $\chi$ agrees with that in \ref{App:Murray_lam}. Since the expression of $\beta$ does not change we can simply plug $\chi$ into Equation (\ref{eq:Murray_lambda}) to obtain $\lambda$, and use Equation (\ref{eq:mu_zebrafish_app}) to obtain $\{\mu_k\}$ for the gradient descent.


\begin{thebibliography}{10}

\bibitem{acheson1990elementary}
David~J Acheson.
\newblock {\em Elementary fluid dynamics}.
\newblock Oxford University Press, 1990.

\bibitem{albers2002transient}
Gregory~W Albers, Louis~R Caplan, J~Donald Easton, Pierre~B Fayad, JP~Mohr,
  Jeffrey~L Saver, and David~G Sherman.
\newblock Transient ischemic attack---proposal for a new definition.
\newblock {\em New England Journal of Medicine}, 347(21):1713--1716, 2002.

\bibitem{alim2013random}
Karen Alim, Gabriel Amselem, Fran{\c{c}}ois Peaudecerf, Michael~P Brenner, and
  Anne Pringle.
\newblock Random network peristalsis in physarum polycephalum organizes fluid
  flows across an individual.
\newblock {\em Proceedings of the National Academy of Sciences},
  110(33):13306--13311, 2013.

\bibitem{banavar2000topology}
Jayanth~R Banavar, Francesca Colaiori, Alessandro Flammini, Amos Maritan, and
  Andrea Rinaldo.
\newblock Topology of the fittest transportation network.
\newblock {\em Physical Review Letters}, 84(20):4745, 2000.

\bibitem{bebber2007biological}
Daniel~P Bebber, Juliet Hynes, Peter~R Darrah, Lynne Boddy, and Mark~D Fricker.
\newblock Biological solutions to transport network design.
\newblock {\em Proceedings of the Royal Society of London B: Biological
  Sciences}, 274(1623):2307--2315, 2007.

\bibitem{biessels2006risk}
Geert~Jan Biessels, Salka Staekenborg, Eric Brunner, Carol Brayne, and Philip
  Scheltens.
\newblock Risk of dementia in diabetes mellitus: a systematic review.
\newblock {\em The Lancet Neurology}, 5(1):64--74, 2006.

\bibitem{blinder2013cortical}
Pablo Blinder, Philbert~S Tsai, John~P Kaufhold, Per~M Knutsen, Harry Suhl, and
  David Kleinfeld.
\newblock The cortical angiome: an interconnected vascular network with
  noncolumnar patterns of blood flow.
\newblock {\em Nature neuroscience}, 16(7):889--897, 2013.

\bibitem{bohn2007structure}
Steffen Bohn and Marcelo~O Magnasco.
\newblock Structure, scaling, and phase transition in the optimal transport
  network.
\newblock {\em Physical review letters}, 98(8):088702, 2007.

\bibitem{chaigneau2003two}
Emmanuelle Chaigneau, Martin Oheim, Etienne Audinat, and Serge Charpak.
\newblock Two-photon imaging of capillary blood flow in olfactory bulb
  glomeruli.
\newblock {\em P. Natl. Acad. Sci. USA}, 100(22):13081--13086, 2003.

\bibitem{chang2017minimal}
Shyr-Shea Chang and Marcus Roper.
\newblock Minimal transport networks with general boundary conditions.
\newblock {\em arXiv preprint arXiv:1709.00694}, 2017.

\bibitem{chang2015optimal}
Shyr-Shea Chang, Shenyinying Tu, Yu-Hsiu Liu, Van Savage, Sheng-Ping~L Hwang,
  and Marcus Roper.
\newblock Optimal occlusion uniformly partitions red blood cells fluxes within
  a microvascular network.
\newblock {\em arXiv preprint arXiv:1512.04184}, 2015.

\bibitem{chico2008modeling}
Timothy~JA Chico, Philip~W Ingham, and David~C Crossman.
\newblock Modeling cardiovascular disease in the zebrafish.
\newblock {\em Trends in cardiovascular medicine}, 18(4):150--155, 2008.

\bibitem{corson2010fluctuations}
Francis Corson.
\newblock Fluctuations and redundancy in optimal transport networks.
\newblock {\em Physical Review Letters}, 104(4):048703, 2010.

\bibitem{de1996use}
MR~De~Leval, G~Dubini, H~Jalali, R~Pietrabissa, et~al.
\newblock Use of computational fluid dynamics in the design of surgical
  procedures: application to the study of competitive flows in cavopulmonary
  connections.
\newblock {\em The Journal of Thoracic and Cardiovascular Surgery},
  111(3):502--513, 1996.

\bibitem{dunn2005spatial}
Andrew~K Dunn, Anna Devor, Anders~M Dale, and David~A Boas.
\newblock Spatial extent of oxygen metabolism and hemodynamic changes during
  functional activation of the rat somatosensory cortex.
\newblock {\em Neuroimage}, 27(2):279--290, 2005.

\bibitem{durand2006architecture}
Marc Durand.
\newblock Architecture of optimal transport networks.
\newblock {\em Physical Review E}, 73(1):016116, 2006.

\bibitem{durand2007structure}
Marc Durand.
\newblock Structure of optimal transport networks subject to a global
  constraint.
\newblock {\em Physical Review Letters}, 98(8):088701, 2007.

\bibitem{fonseca2005endothelial}
Vivian Fonseca and Ali Jawa.
\newblock Endothelial and erectile dysfunction, diabetes mellitus, and the
  metabolic syndrome: common pathways and treatments?
\newblock {\em The American journal of cardiology}, 96(12):13--18, 2005.

\bibitem{hall2015guyton}
John~E Hall.
\newblock {\em Guyton and Hall textbook of medical physiology}.
\newblock Elsevier Health Sciences, 2015.

\bibitem{hu2013adaptation}
Dan Hu and David Cai.
\newblock Adaptation and optimization of biological transport networks.
\newblock {\em Physical review letters}, 111(13):138701, 2013.

\bibitem{hu2012blood}
Dan Hu, David Cai, and Aaditya~V Rangan.
\newblock Blood vessel adaptation with fluctuations in capillary flow
  distribution.
\newblock {\em PloS one}, 7(9):e45444, 2012.

\bibitem{insel2013nih}
Thomas~R Insel, Story~C Landis, and Francis~S Collins.
\newblock The nih brain initiative.
\newblock {\em Science}, 340(6133):687--688, 2013.

\bibitem{isogai2001vascular}
Sumio Isogai, Masaharu Horiguchi, and Brant~M Weinstein.
\newblock The vascular anatomy of the developing zebrafish: an atlas of
  embryonic and early larval development.
\newblock {\em Developmental biology}, 230(2):278--301, 2001.

\bibitem{katifori2010damage}
Eleni Katifori, Gergely~J Sz{\"o}ll{\H{o}}si, and Marcelo~O Magnasco.
\newblock Damage and fluctuations induce loops in optimal transport networks.
\newblock {\em Physical Review Letters}, 104(4):048704, 2010.

\bibitem{klein1995retinal}
Ronald Klein, Stacy~M Meuer, Scot~E Moss, and Barbara~EK Klein.
\newblock Retinal microaneurysm counts and 10-year progression of diabetic
  retinopathy.
\newblock {\em Archives of Ophthalmology}, 113(11):1386--1391, 1995.

\bibitem{lawson2002vivo}
Nathan~D Lawson and Brant~M Weinstein.
\newblock In vivo imaging of embryonic vascular development using transgenic
  zebrafish.
\newblock {\em Developmental biology}, 248(2):307--318, 2002.

\bibitem{lieschke2007animal}
Graham~J Lieschke and Peter~D Currie.
\newblock Animal models of human disease: zebrafish swim into view.
\newblock {\em Nature Reviews Genetics}, 8(5):353--367, 2007.

\bibitem{LP:book}
Russell Lyons and Yuval Peres.
\newblock {\em Probability on Trees and Networks}.
\newblock Cambridge University Press, New York, 2016.
\newblock Available at \url{http://pages.iu.edu/~rdlyons/}.

\bibitem{marsden2007effects}
Alison~L Marsden, Irene~E Vignon-Clementel, Frandics~P Chan, Jeffrey~A
  Feinstein, and Charles~A Taylor.
\newblock Effects of exercise and respiration on hemodynamic efficiency in cfd
  simulations of the total cavopulmonary connection.
\newblock {\em Annals of biomedical engineering}, 35(2):250--263, 2007.

\bibitem{murray1926physiologicalangle}
Cecil~D Murray.
\newblock The physiological principle of minimum work applied to the angle of
  branching of arteries.
\newblock {\em The Journal of general physiology}, 9(6):835--841, 1926.

\bibitem{murray1926physiological}
Cecil~D Murray.
\newblock The physiological principle of minimum work i. the vascular system
  and the cost of blood volume.
\newblock {\em Proceedings of the National Academy of Sciences},
  12(3):207--214, 1926.

\bibitem{pecoraro1990pathways}
Roger~E Pecoraro, Gayle~E Reiber, and Ernest~M Burgess.
\newblock Pathways to diabetic limb amputation: basis for prevention.
\newblock {\em Diabetes care}, 13(5):513--521, 1990.

\bibitem{pries1989red}
AR~Pries, K~Ley, M~Claassen, and P~Gaehtgens.
\newblock Red cell distribution at microvascular bifurcations.
\newblock {\em Microvascular research}, 38(1):81--101, 1989.

\bibitem{pries1998structural}
AR~Pries, TW~Secomb, and P~Gaehtgens.
\newblock Structural adaptation and stability of microvascular networks: theory
  and simulations.
\newblock {\em American Journal of Physiology-Heart and Circulatory
  Physiology}, 275(2):H349--H360, 1998.

\bibitem{pries2005microvascular}
Axel~R Pries and Tim~W Secomb.
\newblock Microvascular blood viscosity in vivo and the endothelial surface
  layer.
\newblock {\em American Journal of Physiology-Heart and Circulatory
  Physiology}, 289(6):H2657--H2664, 2005.

\bibitem{reichard1993effect}
Per Reichard, Bengt-Yngve Nilsson, and Urban Rosenqvist.
\newblock The effect of long-term intensified insulin treatment on the
  development of microvascular complications of diabetes mellitus.
\newblock {\em New England Journal of Medicine}, 329(5):304--309, 1993.

\bibitem{ronellenfitsch2016global}
Henrik Ronellenfitsch and Eleni Katifori.
\newblock Global optimization, local adaptation and the role of growth in
  distribution networks.
\newblock {\em arXiv preprint arXiv:1606.00331}, 2016.

\bibitem{roper2013nuclear}
Marcus Roper, Anna Simonin, Patrick~C Hickey, Abby Leeder, and N~Louise Glass.
\newblock Nuclear dynamics in a fungal chimera.
\newblock {\em Proceedings of the National Academy of Sciences},
  110(32):12875--12880, 2013.

\bibitem{savage2008sizing}
Van~M Savage, Eric~J Deeds, and Walter Fontana.
\newblock Sizing up allometric scaling theory.
\newblock {\em PLoS Comput Biol}, 4(9):e1000171, 2008.

\bibitem{sherman1981connecting}
Thomas~F Sherman.
\newblock On connecting large vessels to small. the meaning of murray's law.
\newblock {\em The Journal of general physiology}, 78(4):431--453, 1981.

\bibitem{taber2001investigating}
Larry~A Taber, Stella Ng, Alicia~M Quesnel, Jennifer Whatman, and Craig~J
  Carmen.
\newblock Investigating murray's law in the chick embryo.
\newblock {\em Journal of biomechanics}, 34(1):121--124, 2001.

\bibitem{tuma2003transcytosis}
Pamela~L Tuma and Ann~L Hubbard.
\newblock Transcytosis: crossing cellular barriers.
\newblock {\em Physiological reviews}, 83(3):871--932, 2003.

\bibitem{walcott2014zebrafish}
Brian~P Walcott and Randall~T Peterson.
\newblock Zebrafish models of cerebrovascular disease.
\newblock {\em Journal of Cerebral Blood Flow \& Metabolism}, 34(4):571--577,
  2014.

\bibitem{west1997general}
Geoffrey~B West, James~H Brown, and Brian~J Enquist.
\newblock A general model for the origin of allometric scaling laws in biology.
\newblock {\em Science}, 276(5309):122--126, 1997.

\bibitem{wu20143d}
Jingpeng Wu, Yong He, Zhongqin Yang, Congdi Guo, Qingming Luo, Wei Zhou,
  Shangbin Chen, Anan Li, Benyi Xiong, Tao Jiang, et~al.
\newblock 3d braincv: simultaneous visualization and analysis of cells and
  capillaries in a whole mouse brain with one-micron voxel resolution.
\newblock {\em Neuroimage}, 87:199--208, 2014.

\bibitem{yang2010constrained}
Weiguang Yang, Jeffrey~A Feinstein, and Alison~L Marsden.
\newblock Constrained optimization of an idealized y-shaped baffle for the
  fontan surgery at rest and exercise.
\newblock {\em Computer methods in applied mechanics and engineering},
  199(33):2135--2149, 2010.

\bibitem{zamir1992relation}
Mair Zamir, Paula Sinclair, and Thomas~H Wonnacott.
\newblock Relation between diameter and flow in major branches of the arch of
  the aorta.
\newblock {\em Journal of biomechanics}, 25(11):1303--1310, 1992.

\end{thebibliography}
\end{document}